\DeclareFontFamily{OT1}{pzc}{}
\DeclareFontShape{OT1}{pzc}{m}{it}{<-> s * [1.000] pzcmi7t}{}
\DeclareMathAlphabet{\mathpzc}{OT1}{pzc}{m}{it}
\theoremstyle{theorem}
\newtheorem{theorem}{Theorem}
\newtheorem{lemma}{Lemma}
\newtheorem{remark}{Remark}
\newtheorem{definition}{Definition}
\newtheorem{assumption}{Assumption}
\newtheorem{problem}{Problem}
\DeclareMathAlphabet{\mathpzc}{OT1}{pzc}{m}{it}
\newcommand{\mcal}{\mathcal}
\newcommand{\R}{{\rm  I\kern-2pt R}}
\renewcommand{\Re}{{\rm  I\kern-2pt R}}
\newcommand{\argmin}{\textrm{arg}\min}
\definecolor{awesome}{rgb}{1.0, 0.13, 0.32}
\newcommand\addtag{\refstepcounter{equation}\tag{\theequation}}
\newcommand{\rmnum}[1]{\romannumeral #1}
\newcommand{\Rmnum}[1]{\expandafter\@slowromancap\romannumeral #1@}
\begin{document}

\title{\bf Infinite Horizon Optimal Transmission Power Control \\ for Remote State Estimation over Fading Channels}
\author{Xiaoqiang Ren,
Junfeng Wu, Karl Henrik Johansson, Guodong Shi, and Ling Shi\thanks{X. Ren and  L. Shi
 are with the Department of Electronic and Computer Engineering, The Hong Kong University of Science and Technology,
 Hong Kong. Email: xren@connect.ust.hk, eesling@ust.hk}
\thanks{J. Wu and K. H. Johansson are with the ACCESS Linnaeus Center, School of Electrical Engineering,
Royal Institute of Technology, Stockholm, Sweden. Email: junfengw@kth.se, kallej@kth.se}
\thanks{G. Shi is with College of Engineering and Computer Science, The Australian National University, Canberra, Australia. Email: guodong.shi@anu.edu.au}
}
\date{}
\maketitle

\begin{abstract}
Jointly optimal transmission power control and remote estimation over an infinite horizon is studied. A sensor observes a dynamic process and sends its observations to a remote estimator over a wireless fading channel characterized by a time-homogeneous Markov chain. The successful transmission probability depends on both the channel gains and the transmission power used by the sensor. The transmission power control rule and the remote estimator should be jointly designed, aiming to minimize an infinite-horizon cost consisting of the power usage and the remote estimation error. A first question one may ask is: {Does this joint optimization problem
have a solution?}
We formulate the joint optimization problem as an average cost
belief-state Markov decision process and answer the question by proving that there exists an optimal deterministic and stationary policy. We then show that when the monitored dynamic process is scalar, the optimal remote estimates depend only on the most recently received sensor observation, and the optimal transmission power is symmetric and monotonically increasing with respect to the innovation error.
\end{abstract}



\section{Introduction}

In networked control systems, control loops are
often closed
over a shared wireless communication network.
This motivates research on remote state estimation problems, where
a sensor measures the state of a linear system and transmits its observations to a remote estimator over a wireless fading channel. Such monitoring problems appear in a wide range of
applications in environmental monitoring, space exploration, smart grids,
intelligent buildings, among others.
The challenges introduced by the networked setting lie
in the fact that
nonideal communication environment and
constrained power supplies at sensing nodes
may result in overall system performance degradation.
The past decade has witnessed tremendous research efforts devoted to  communication-constrained estimation problems,
with the purpose of establishing a balance between
estimation performance and communication cost.

%

\subsection{Related Work}
Wireless communications are being widely used nowadays
in sensor networks and networked control systems.
The interface of control and wireless communication has been
a central theme in the study of networked sensing and control systems in the past decade. Early works assumed finite-capacity digital channels and focused on the minimum channel capacity or data rate
needed for feedback stabilization, and on constructing
 encoder-decoder pairs to improve performance,~e.g., \cite{wong2,nair2004stabilizability,
tatikonda2,ishii, fu-xie-tac05}.
Motivated by the fact that packets are
the fundamental information carrier in most modern data networks~\cite{joao07}, networked control and estimation subject to
packet delays~\cite{schenato2008optimal,shi2009kalman,you2011mean} and packet losses~\cite{sinopoli2004kalman,huang-dey-stability-kf,
schenato2007foundations,gupta2009optimal} has been
extensively studied.

State estimation is embedded
in many networked control applications, playing a fundamental role therein.
For networked state estimation subject to limited communication resource, the research on controlled communication has been extensive, see the survey~\cite{joao07}.
Controlled communication, in general referring to reducing the communication rate intentionally to obtain a desirable tradeoff between the estimation performance and the communication rate, is motivated from at least two facts:
 \begin{inparaenum}
 \item[$(i).$]Wireless sensors are usually battery-powered and sparsely deployed, and replacement of battery is difficult or even impossible, so the amount of communication needs to e kept at a minimum as communication is often the dominating on-board energy consumer~\cite{mainwaring2002wireless}.
 \item[$(ii).$]Traffic congestion in a sensor network many lead to packet losses and other network performance degradation.
  \end{inparaenum}
 To minimize the inevitable enlarged estimation
error due to reduced communication rate,
a communication scheduling strategy for the sensor is needed.
Two lines of research directions are present in the literature. The first line is known as time-based (offline) scheduling, whereby the communication decisions are simply specified only according to the time. Informally, a
purely time-based strategy is likely to lead to a periodic communication
schedule~\cite{yang2011deterministic,zhao2014optimal}.
Optimal periodic scheduling has been extensively studied, e.g,~\cite{shi2011optimal,huber2012optimal,shi2013optimal,liu2014optimal}.
The second line is known as event-based scheduling,
whereby the communication decisions are specified according to the system state.
The idea of event-based scheduling was popularized by
Lebesgue sampling~\cite{aastrom2002comparison}.
Deterministic event-based transmission schedules
have been proposed in~\cite{xu05cdc,Imer-CDC-05,randy-cogill-acc07,sijs2009event,
Lispa11TAC,wu2013event,nayyar2013optimal,ramesh2013design,
junfeng14tac,molin2014optimal} for different application scenarios, and randomized event-based transmission schedules can be found in~\cite{vijay_sensor_schedule,han2014stochastic,
journals/corr/WeerakkodyMSHS15}. Essentially, event-based scheduling is a sequential decision problem with
a team of two agents (a sensor and an estimator).
Due to the nonclassical information structure of the two agents,  joint optimization of the communication controller and
the estimator is hard~\cite{yuksel2013stochastic}.
Most works~\cite{xu05cdc,Imer-CDC-05,randy-cogill-acc07,sijs2009event,wu2013event,nayyar2013optimal,ramesh2013design,
junfeng14tac,molin2014optimal}
bypassed the challenge by imposing
restricted information structures or by approximations,
while some authors
have obtained structural descriptions of the agents under the joint optimization framework, using a majorization argument~\cite{Lispa11TAC,nayyar2013optimal} or
an iterative procedure~\cite{molin2014optimal}.
In all these works communication models were highly simplified, restricted to a binary switching model.

Fading is non-ignorable impairment to wireless communication~\cite{goldsmith2005wireless}. The effects of fading has been taken into account in networked control systems~\cite{elia2005remote,dey2009kalman,xiao2012feedback,Quevedo13TAC}.
There are works that are concerned with transmission power
management for state estimation~\cite{wang2009distortion,queahl10,leong2011power,
WuAutomatica13,gatsis2014optimal,nourian2014optimal,Nourian15JSAC}.
The power allocated
to transmission affects the probability of successful reception of the measurement, thus affecting the estimation performance.
In~\cite{queahl10}, transmission power is allocated
via a predictive control algorithm
based on the channel gain prediction.
In \cite{nourian2014optimal}, imperfect acknowledgments of communication links and energy harvesting were taken into account.
In~\cite{leong2011power},
power allocation for the estimation
outage minimization problem was investigated in estimation of a scalar Gauss-Markov source.
In all of the aforementioned works,
the estimation error covariances are a Markov chain controlled by the transmission power, so
Markov decision process (MDP) theory is ready for solving this kind of problems.
The reference~\cite{gatsis2014optimal} considered
the case when
plant state is transmitted from a sensor to
the controller over a wireless fading channel. The transmission power is adapted to the
channel gain and the plant states.
Due to nonclassical information structure, joint optimization of plant input and
transmit power policies, although desired, is difficult.
A restricted information structure was therefore imposed,
i.e., only a subset of the full information history available at the sensor is utilized when determining the transmission power,
to allow separate design at expense of loss of optimality.
It seems that such a challenge involved in these joint optimization
problems always exists.

\subsection{Contributions}

In this paper, we consider a remote state estimation scheme, where
a sensor measures the state of a linear time-invariant discrete-time process and transmits its observations to a remote estimator over a  wireless fading channel characterized by a time-homogeneous Markov chain. The successful
transmission probability depends on both the channel gain and
the transmission power used by the sensor.
 The objective is
to minimize an infinite horizon cost consisting of the power
consumption and the remote estimation error.
In contrast to~\cite{gatsis2014optimal},
no approximations are made to prevent loss of optimality, which however renders the analysis challenging.
We
formulate our problem as an infinite horizon belief-state MDP with an average cost criterion.
Contrary to the finite horizon belief-state MDP considered
in~\cite{nayyar2013optimal}, for which an optimal solution
exists, a first question that one may ask about our infinite horizon MDP is: \emph{Does this optimization problem have a solution?}
The answer is yes provided certain conditions given in this paper.
On top of this, we present structural results on the optimal transmission power controller and the remote estimator for some special systems, which can be seen
as the extension of the results in~\cite{Lispa11TAC,molin2014optimal} for the power management scenario.
The analysis tools used in the work (i.e., the partially observable Markov decision process (POMDP) formulation  and the majorization interpretation) is inspired
by~\cite{nayyar2013optimal}. Nevertheless, the contributions of the two works are distinct. In~\cite{nayyar2013optimal} the authors mainly studied the
threshold structure of the optimal communication strategy
within a finite horizon, while the present work focuses on the
asymptotic analysis of the  joint optimization problem over an
infinite horizon.

In summary, the main contributions of this paper are listed as follows. We prove that a deterministic and stationary policy is an optimal solution to the formulated average cost belief-state MDP.
      We should remark that the abstractness of the considered state and action spaces (the state space is  a probability measure space and the action space a function space) renders the analysis rather challenging. Then we prove that both the optimal estimator and the optimal power control have simple structures when the dynamic process monitored is scalar. To be precise, the remote estimator synchronizes its estimates with the data received in the presence of successful transmissions, and linearly projects its estimates a step forward otherwise. For a certain belief, the optimal transmission power is a symmetric and monotonically increasing function of the innovation error. Thanks to these properties, both the offline computation and the online implementation of the optimal transmission power rule are greatly simplified, especially when the available power levels are discrete, for which only thresholds of switchings between power levels are to be determined.

This paper provides a theory in support of the study of infinite horizon communication-constrained estimation problems.
Deterministic and stationary policies are relatively easy to compute and implement, thus it is important to know that
an optimal solution  that such a policy exists.
The structural characteristic of the jointly optimal transmission power and estimation policies provides insights into  the design of energy-efficient state estimation algorithms.


\subsection{Paper Organization}
In Section~\ref{sec:problem-setup}, we provide the mathematical formulation of the system model adopted, including the monitored dynamic process, the wireless fading channel, the transmission power controller and the remote estimator.
We then present the considered problem and formulate it as an average cost MDP in Section~\ref{sec:remote-state-estim}. In Section~\ref{sec:ExistOptPolicy}, we prove that there exists a deterministic and stationary policy that is optimal to the formulated MDP. Some structural results about the optimal remote estimator and the optimal transmission power control strategy are presented in Section~\ref{sec:structureResults}. Concluding remarks are given in Section~\ref{sec:Conclusion}. Some auxiliary background results and a supporting lemma are provided in the appendices.

\subsection*{Notation} 
$\mathbb{N}$ and $\mathbb{R}_+$ are the sets of
 nonnegative integers and
nonnegative real numbers, respectively.
$\mathbb{S}_{+}^{n}$ (and $\mathbb{S}_{++}^{n}$)
is the set of $n$ by $n$ positive semi-definite
matrices (and positive definite matrices).
When $X \in\mathbb{S}_{+}^{n}$ (and $\mathbb{S}_{++}^{n}$),
we write $X \succeq 0$ (and $X \succ 0$). $X\succeq Y$
if $X - Y \in \mathbb{S}_{+}^{n}$. $\mathrm{Tr}(\cdot)$
and $\mathrm{det}(\cdot)$ are the trace and the determinant
of a matrix, respectively.
$\lambda_{\rm max}(\cdot)$ represents the eigenvalue, having the largest magnitude, of
a matrix.
The superscripts $^\top$ and $^{-1}$
stand for matrix transposition and matrix inversion, respectively.
The indictor function of a set $\mathcal{A}$ is defined as
$$
\mathds{1}_{\mathcal{A}}(\omega)=\left\{\begin{array}{lll}
1, & \omega\in\mathcal{A}\\
0, & \omega\not \in \mathcal{A}.
\end{array}\right.
$$
The notation $p(\mathbf{x};x)$
represents the probability density function (pdf) of a random variable $\mathbf{x}$ taking value at $x$.
If being clear
in the context, $\mathbf{x}$ is omitted.
For a random variable $\mathbf{x}$ and a pdf $\theta$,
the notation $\mathbf{x}\sim \theta$ means
that $\mathbf{x}$ follows the distribution defined by
$\theta$.
The symbol $\mathscr{N}_{x_0,\Sigma}(\mathbf x)$ denotes a Gaussian distribution of $\mathbf x$
with mean $x_0$ and covariance $\Sigma$.
For measurable functions $f, g:\mathbb{R}^n\mapsto\mathbb{R}$, we use $f*g$ to denote the convolution of $f$ and $g$.
For a Lebesgue measurable set $A \subset \mathbb{R}^n$, $\mathfrak{L}(A)$ denotes the Lebesgue measure of $A$. Let $\|x\|$ denote the $L^2$ norm of a vector $x\in\mathbb{R}^n$.
$\delta_{ij}$ is the Dirac delta function, i.e., $\delta_{ij}$ equals to $1$ when $i=j$ and $0$ otherwise.


\section{System Model} \label{sec:problem-setup}
In this paper, we focus on  dynamic power control for remote state estimation. We consider a remote state estimation scheme as depicted in Figure~\ref{fig:system}. In this scheme, a sensor
measures a linear time-invariant discrete-time process and sends its measurement in the form of data packets, to
 a remote estimator over a wireless link.
 The remote estimator produces an estimate
of the process state based on the received data. When sending packets through
the wireless channel,
transmissions may fail due to interference and weak channel gains.
Packet losses lead to distortion of the remote estimation and packet loss probabilities depend on transmission power levels used
by the transmitter and on the channel gains.
Lower loss probabilities require
higher transmission power usage; on the other hand,
energy saving is critical to expand
the lifetime of the sensor. The wireless communication
overhead dominates the total power consumption, therefore we
introduce
a transmission power controller, which aims to
balance the transmission energy cost
and distortion penalty as the
channel gain varies over time.

In what follows, the attention is devoted to laying out the main
components in Figure~\ref{fig:system}.

\begin{center}
\setlength{\unitlength}{1.5mm}
\begin{figure}[t!]
\thicklines
\centering
\begin{picture}
(56,28)(-3,-9)
\thicklines
\put(0,14){\framebox(10,4){${\rm Process}$}}
\put(3,7.8){\vector(1,0){5}}
\put(3,14){\vector(0,-1){12}}
\put(0,-4){\framebox(13,6){${\rm Tx~Power\atop Controller}$}}
\put(10.5,2){\vector(0,1){4}}
\put(11.5,3.5){$u_k$}
\put(8,6){\framebox(5,3.5){${\rm Tx}$}}
\put(13,7.75){\vector(1,0){8.5}}
\put (17.5,8.75){$x_k$}
\put(21.5, 4.75){\framebox(8,6)}
\put(38,5.25){\framebox(9,5){${\rm Remote \atop Estimator}$}}
\put(47,7.75){\vector(1,0){7}}
\put (48,9){$\tilde x_k$}
\put (21.75,7.1){$\rm Wireless\atop Channel$}
\put(29.5,7.75){\vector(1,0){8.5}}
\put (30.5,9){$\gamma_k x_k$}
\put(42.5,5.25){\line(0,-1){6.25}}
\put(42.5,-1){\vector(-1,0){29.5}}
\put(-1, -7.5){\framebox(16,19)}
\put (-0.8,-7){$\rm Sensor$}
{\color{blue}
\put(-2.5, -8.5){\dashbox{1}(38.5,27.5)}}
{\color{red}
\put(16, -2.5){\dashbox{2}(34.5,14.5)}}
\end{picture}
\caption{The remote state estimation scheme.}
\label{fig:system}
\end{figure}
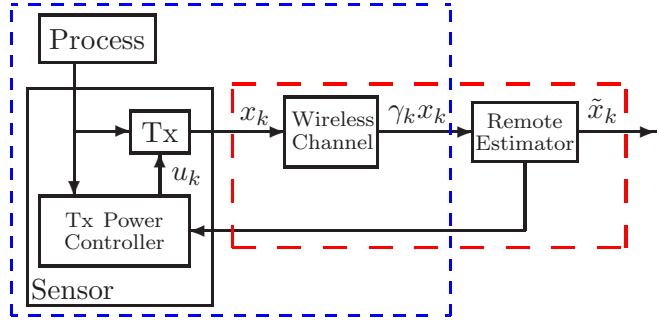
\end{center}
\vspace{-1cm}

\subsection{State Process} \label{sec:local-state-estimate}


We consider the following linear time-invariant discrete-time process:
\begin{equation}
  x_{k+1}  = Ax_k + w_k, \label{eqn:process-dynamics}
\end{equation}
where  $k\in \mathbb{N}$, $x_k \in \mathbb{R}^{n}$ is the process state vector at time $k$, $w_{k} \in\mathbb{R}^{n} $ is zero-mean independent and identically distributed (i.i.d.) noises, described by the probability density function (pdf)
$\mu_w$,  with $\mathbb{E}[w_{k}w_{k}^\prime] =W$ ($W\succ 0$).
The initial state $x_0$, independent of
${w_k}, k\in\mathbb{N}$, is described by pdf $\mu_{x_0}$, with mean $\mathbb{E}[x_0]$ and covariance $\Sigma_0$.
Without loss of generality, we assume $\mathbb{E}[x_0]=0$, as
nonzero-mean cases can be translated into zero-mean one
by coordinate change
$ x'_k=x_k-\mathbb{E}[x_0]$.
We let
$\mathcal{X}=\mathbb{R}^n$ denote the domain of $x_k$.
The system parameters are all known to the sensor as well as the remote estimator. We assume the plant is unstable, i.e., $|\lambda_{\mathrm{max}}(A)| > 1$.

\subsection{Wireless Communication Model }\label{subsec:wireless-communcation-model}

The sensor measures and sends the process state $x_k$ to the remote estimator over an additive white Gaussian noise (AWGN) channel
which suffers from channel fading (see Figure~\ref{wireless-channel-model}):
$$
\mathbf{y}=g_k \mathbf{x}+v_k,
$$
where $g_k$ is a random complex number, and
$v_k$ is additive white Gaussian noise; $\bf x$ represents the signal (e.g., $x_k$) sent by the transmitter and $\bf y$ the signal
received by the receiver. Let the channel gain $h_k = |g_k|^2$ take values in a finite set $\mathbbm{h} \subseteq \mathbb{R}_+$
and
$\{h_k\}_{k\in\mathbb{N}}$ possess temporal correlation modeled by a time-homogenous Markov chain~\cite{Quevedo13TAC,Nourian15JSAC}. The one-step transition probability for this chain is denoted by
\[
\pi(\cdot|\cdot): \mathbbm{h}\times \mathbbm{h} \longmapsto
[0,1].
\]
The function $\pi(\cdot|\cdot)$ is known \textit{a priori}.
We assume the remote estimator or the sensor can access
the channel state information (CSI), so
the channel gain $h_k$ is available
at each time before transmission.
This can for instance done using
pilot-aided channel estimation techniques are adopted, by which the transmitter sends a pilot signal at each fading block and the channel coefficients, including the channel gain,
are obtained at the receiver~\cite{goldsmith2005wireless}.
The estimation errors of the channel gains are not taken into account in this paper.
\begin{center}
\setlength{\unitlength}{1.5mm}
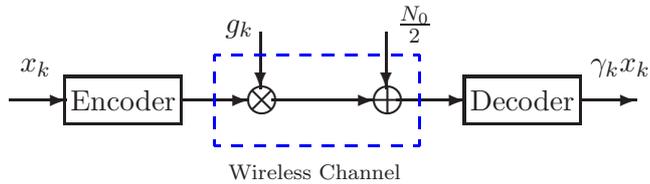
\begin{figure}[t!]
\thicklines
\centering
\begin{picture}
(56,16)(0,-8)
\thicklines
\put(5,-2){\framebox(10,4){${\rm Encoder}$}}
\put(40,-2){\framebox(10,4){${\rm Decoder}$}}
\put(1,2.5){$x_k$}
\put(51,2.5){$\gamma_k x_k$}
\put(0,0){\vector(1,0){5}}
\put(50,0){\vector(1,0){5}}
\put(23.05,0){\vector(1,0){9}}
\put(34,0){\vector(1,0){6}}
\put(15,0){\vector(1,0){6}}
\put(20.75,-0.625){$\bigotimes$}
\put(31.75,-0.625){$\bigoplus$}
\put(22.05,6){\vector(0,-1){5}}
\put(33.05,6){\vector(0,-1){5}}
\put(19,6){$g_k$}
\put(34,6){$\frac{N_0}{2}$}
{\color{blue}
\put(18, -4){\dashbox{1}(18,8)}
}
\put (19.25,-7){\scriptsize $\rm Wireless~Channel$}
\end{picture}
\caption{Wireless communication model.
${N}_0$ is the power spectral density of the channel noise $v_k$.}
\label{wireless-channel-model}
\end{figure}
\end{center}
\vspace{-0.5cm}

To facilitate our analysis, the following assumption is made.
\begin{assumption}[Communication model] \label{asmpt:CommunicationModel} \hfill

\begin{enumerate}
\item[(\rmnum{1}).]
The channel gain $h_k$ is independent of the system paremeters.

\item[(\rmnum{2}).]
The channel is block fading, i.e., the channel gain
remains constant during each packet
transmission and varies from block to block.

\item[(\rmnum{3}).]
The quantization effect is negligible and does not effect
the remote estimator.

\item[(\rmnum{4}).]
The receiver can
detect symbol errors\footnote{In practice, symbol errors can be detected via a cyclic redundancy check (CRC) code.}. Only the data reconstructed error-free are regarded as successfully reception.
The receiver perfectly realizes whether the instantaneous communication succeeds or not.

\item[(\rmnum{5}).]
The Markov chain governing the channel gains, $\pi(\cdot|\cdot)$, is aperiodic and irreducible.

\end{enumerate}
\end{assumption}
Assumption~\ref{asmpt:CommunicationModel}-(\rmnum{1})(\rmnum{2})(\rmnum{3})(\rmnum{4}) are standard for fading channel model. Note that Assumption \ref{asmpt:CommunicationModel}-(\rmnum{1})(\rmnum{3})(\rmnum{4}) were used in~\cite{sinopoli2004kalman,fu2009Automatica,leong2012power,
WuAutomatica13,gatsis2014optimal}, and that Assumption \ref{asmpt:CommunicationModel}-(\rmnum{2}) was used in~\cite{queahl10}.
From Assumption \ref{asmpt:CommunicationModel}-(\rmnum{4}), 
whether or not the data sent by the sensor is successfully received by the remote estimator is indicated by a sequence $\{\gamma_k\}_{k\in\mathbb{N}}$ of random
variables, where
\begin{equation} \label{eq:43}
  \gamma_k =
  \begin{cases}
  1, &\text{if ${x}_k$ is received error-free at time $k$,}\\
  0, & \text{otherwise (regarded as dropout),}
  \end{cases}\vspace{-1mm}
\end{equation}
initialized with $\gamma_0=1$.
When $\gamma_k=0$, we regard the remote estimator as having received
a virtual symbol $\sharp$. Assumption~\ref{asmpt:CommunicationModel}-(\rmnum{5}) is a technical requirement for Theorem~\ref{theorem:ExistenceStationary}. One notes that both the i.i.d. channel gains model and the Gilbert--Elliott model with the good/bad state transition probability not equal to 1 satisfies Assumption~\ref{asmpt:CommunicationModel}-(\rmnum{5}).

\subsection{Transmission Power Controller}
Let $u_k\in \mathbb{R}_{+}$ be the transmission power at time $k$, the  power supplied to the radio transmitter.
Due to constraints with respect to radio power amplifiers, the admissible transmission power is restricted. Let $u_k$ take values in $\mathcal{U}\subset[0,u_{\mathrm{max}}]$,
in which $u_{\mathrm{max}}$ stands for the maximum power.
Depending on the radio implementation, $\mathcal{U}$ may be
a continuum or a finite set. It is further assumed that $\mathcal{U}$ is compact and contains zero.
Under Assumption~\ref{asmpt:CommunicationModel}-(\rmnum{3}), the successful packet reception
is statistically determined by the signal-to-noise ratio (SNR) ${h_kp_k}/{N_0}$ at the receiver, where
${N}_0$ is the power spectral density of $v_k$.
A very general model~\cite{Quevedo13TAC, gatsis2014optimal} of the conditional packet reception
probabilities for a variety of modulations is as follows:
\begin{equation} \label{definition:lambda_k}
  q(u_k,h_k) \triangleq {\mathbb{P}} \left(\gamma_k=1|
  u_k,h_k\right),
\end{equation}
where $q$ is a nondecreasing function in both $u_k$ and $h_k$.

\begin{assumption}
The function $q(u,h): \mathcal{U}\times \mathbbm{h}\mapsto [0,1]$ is continuous almost everywhere with respect to $u$ for any fixed $h$. Especially, $q(0,h) = 0$ for all $h$.
\end{assumption}
\begin{remark}
If letting $q(u_k,h_k)=q(u_k)$ with $\mathcal{U}=\{0,1\}$ and
 $$q(u_k)=\left\{\begin{array}{ccc} 1, \hbox{~if~}u_k=1;\\
 0, \hbox{~if~}u_k=0,\end{array}\right.$$
we conclude that  the ``on-off'' controlled communication problem considered in~\cite{shi2011optimal,huber2012optimal,shi2013optimal,liu2014optimal,
 xu05cdc,Imer-CDC-05,randy-cogill-acc07,sijs2009event,
Lispa11TAC,wu2013event,nayyar2013optimal,ramesh2013design,
junfeng14tac,molin2014optimal,vijay_sensor_schedule,han2014stochastic,journals/corr/WeerakkodyMSHS15}
is a special case of the transmission power control problem considered here.

\end{remark}
We assume that packet reception probabilities are conditionally
independent for given channel gains and transmission power levels,
which is stated in the following assumption.
\begin{assumption}
The following equality holds for any $k\in\mathbb{N}$,
\begin{equation*}
{\mathbb{P}} \left(\gamma_k=r_k,\ldots,\gamma_1=r_1|
  u_{1:k},h_{1:k}\right)=\prod_{j=1}^{k}
  {\mathbb{P}} \left(\gamma_j=r_j|
  u_j,h_j\right).
\end{equation*}
\end{assumption}

{\begin{remark}
Assumption 2 is standard for digital communication over fading channels. Assumption 3 is in accordance with the common sense that
the symbol error rate statistically depends on the
instantaneous SNR at the receiver.
Many digital communication modulation methods are embraced by these assumptions~\cite{mostofi2009drop, queahl10, Quevedo13TAC, haykin1988digital}.
\end{remark}}

\begin{assumption} \label{asmpt:assumption-stability}
For the least favorable channel power gain $\underline{h} \triangleq \min\{h: h\in \mathbbm{h}\}$, the maximum achievable successful transmission probability satisfies
\[ q(\bar{u},\underline{h}) > 1- \frac{1}{ \lambda_{\mathrm{max}}(A)^2  },   \]
where $\bar{u}$ is the highest available power level: $\bar{u} \triangleq \max\{u: u\in\mathcal{U}\}$ and $A$ is the system matrix in~\eqref{eqn:process-dynamics}.
\end{assumption}
Note that since both $\mathbbm{h}$ and $\mathcal{U}$ are compact, $\underline{h}$ and $\bar{u}$ always exist.
\begin{remark}
Assumption~\ref{asmpt:assumption-stability} provides a sufficient
condition under which the expected estimation error covariance
is bounded, even for the least favorable channel power gain.
Similar assumptions were also adopted in many works, such as~\cite{Quevedo13TAC,nourian2014optimal},
for guaranteeing the stability of the Kalman filtering subject to
random packet losses.
\end{remark}

\subsection{Remote Estimator}
At the base station side, each time a remote estimator generates an estimate based on what it has received from the sensor.
In many applications, the remote estimator is
powered by an external source or is connected with an energy-abundant controller/actuator, thus having sufficient communication energy
in contrast to the energy-constrained sensor. This energy asymmetry allows us to assume that the estimator can send  messages back to the sensor. The content of feedback messages are separatively defined under different system implementations, the details of which will be discussed later in Section~\ref{sec:practical-implementation}.
Denote by $\mathcal{O}^-_k$ the observation
obtained by the remote estimator up to \emph{before}
the communication at time $k$, i.e.,
\begin{equation*}
  \mathcal{O}^-_k\triangleq\{\gamma_1{x}_1,...,\gamma_{k-1}{x}_{k-1}\}\cup
  \{\gamma_{1},\ldots,\gamma_{k-1}\}\cup \{h_{1},\ldots,h_k\}.
\end{equation*}
Similarly, denote by $\mathcal{O}^+_k$ the observation
obtained by the remote estimator up to \emph{after}
the communication at time $k$, where
\begin{equation*}
  \mathcal{O}^+_k\triangleq \mathcal{O}^-_k
  \cup  \{\gamma_{k}, \gamma_k{x}_{k}\}.
\end{equation*}

\section{Problem Definition} \label{sec:remote-state-estim}

We take into account both the estimation quality at the remote estimator and the transmission energy consumed by the sensor.
To this purpose, joint design of the transmission power controller
and the remote estimator is desired.
Measurement realizations, communication indicators, and channel gains are adopted to manage the usage of transmission power:
\begin{equation}\label{def:trans-power-controller}
u_{k}=f_{k}\big(  x_{1:k},h_{1:k},\gamma_{1:k-1}\big).
\end{equation}
To produce $u_k$, the remote estimator
can work under two different implementations, regarding the computational
capacity of the sensor node: the remote estimator
decides an intermediate function
\begin{equation}\label{eqn:lk}
l_k(\cdot)=f_{k}(\mathcal{O}^-_{k}, \cdot )
\end{equation} and feeds $l_k$ back to the
sensor and then the sensor evaluates
$u_k$ by $u_k=l_k(x_{1:k})$; or, equivalently,
the estimator directly feeds $\gamma_{k}$'s back to the sensor, and the sensor is in charge of deciding $f_k$.
Given the transmission power controller, the remote estimator generates an estimate as a function of
what it has received from the sensor, i.e.,{
\begin{equation}\label{def:remote-estimator}
\tilde x_k\triangleq g_k(\mathcal{O}^+_k).
\end{equation}  }
We emphasize that $\tilde x_k$ also depends on $f_k$
since $f_k$ statistically affects the arrival of the data.
The average remote estimation
quality over an infinite time horizon is quantified by
\begin{equation}\label{def:J_E}
\hspace{-2mm}\mathpzc {E}(\mathbf{f},\mathbf{g})\hspace{-0mm}\triangleq
\hspace{-0mm}\mathbb{E}_{\mathbf{f},\mathbf{g}}\left[
\mathop {\rm limsup}_{T\to \infty}\frac{1}{T}
\sum_{j=1}^{T}\|x_k-\tilde x_k\|^2\right];
\end{equation}
correspondingly, the average transmission power cost, denoted as
$\mathpzc{W}(\mathbf{f},\mathbf{g})$, is given by
\begin{equation}\label{def:J_P}
\mathpzc{W}(\mathbf{f})
\triangleq \mathbb{E}_{\mathbf{f}}\left[
\mathop {\rm limsup}_{T\to \infty}
\frac{1}{T}
\sum_{k=1}^{T}u_k\right],
\end{equation}
where $\mathbf{f}\triangleq \{f_1,\ldots,f_k,\ldots\}$
and $\mathbf{g}\triangleq \{g_1,\ldots,g_t,\ldots\}$.
The arguments $\mathbf{f}$ and $\mathbf{g}$ indicate that
the quantity of~\eqref{def:J_E} depends on them.
This is also the case in~\eqref{def:J_P}.
Note that in~\eqref{def:J_E}~and~\eqref{def:J_P} the expectations are
taken with respect to the randomness of the system and the transmission outcomes for given $\mathbf{f}$ and $\mathbf{g}$.
For the remote state estimation system, we naturally wonder how to find a jointly optimal
transmission power controller $f_k^*$ and remote state estimator $g_k^*$ satisfying
\begin{equation} \label{eqn:hiddenproblem}
\mathrm{minimize}_{\mathbf{f},\mathbf{g}}
\left[  \mathpzc{E}(\mathbf{f},\mathbf{g})+
\alpha\,\mathpzc{W}(\mathbf{f})\right],
\end{equation}
where the constant
$\alpha$ can be interpreted as a Lagrange multiplier.
{
We should remark that~\eqref{eqn:hiddenproblem} is difficult to solve due to the nonclassical information structure~\cite{yuksel2013stochastic}.
What is more,~\eqref{eqn:hiddenproblem} has an average cost criterion that depends only on the limiting behavior of $\mathbf{f}$ and $\mathbf{g}$, adding additional analysis difficulty.

\subsection{Belief-State Markov Decision Process}

To find a solution to the optimization problem~\eqref{eqn:hiddenproblem}, we first observe from~\eqref{def:J_P} that $\mathpzc{W}(\mathbf{f})$ does not depend on
${\bf g}$, thus leading to an
insight into the structure of $g_k^*$---Lemma~\ref{lemma:opt-estimator}, the proof of which  follows from optimal filtering theory: the conditional mean is the
minimum-variance estimate~\cite{anderson79}.
Similar results can be seen in~\cite{Lispa11TAC,nayyar2013optimal,molin2014optimal}.
\begin{lemma}\label{lemma:opt-estimator}
For any given transmission power controller $f_k$, the
optimal remote estimator $g_k^\ast$ is
the MMSE estimator
\begin{equation}\label{eqn:mmse-x-k}
\hat x_k\triangleq g_k^\ast(\mathcal{O}^+_k)=
\mathbb{E}_{f_{1:k}}[x_k|\mathcal{O}^+_k].
\end{equation}
\end{lemma}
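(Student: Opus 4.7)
Because $\mathpzc{W}(\mathbf{f})$ depends only on the transmission policy, the joint objective in~\eqref{eqn:hiddenproblem} separates: for every fixed $\mathbf{f}$ the problem reduces to minimizing $\mathpzc{E}(\mathbf{f},\mathbf{g})$ over estimator sequences $\mathbf{g}$. The first and crucial observation is that the choice of $\mathbf{g}$ never feeds back into the sensor, plant, or channel --- inspection of~\eqref{def:trans-power-controller} and the definitions of $\mathcal{O}_k^-,\mathcal{O}_k^+$ shows that $\tilde x_k$ plays no role in the generation of $(x_k,h_k,\gamma_k,u_k)$. Consequently, for any prescribed $\mathbf{f}$ the joint law of $(x_k,\mathcal{O}_k^+)$ is identical under every admissible $\mathbf{g}$, which decouples the minimization across time steps and across the choices $g_1,g_2,\ldots$.

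\textbf{Pointwise minimization.} Given this decoupling, I would invoke the standard orthogonal decomposition for conditional expectations: for any $\sigma(\mathcal{O}_k^+)$-measurable estimator $\tilde x_k=g_k(\mathcal{O}_k^+)$,
\begin{equation*}
\mathbb{E}\bigl[\|x_k-\tilde x_k\|^2\,\big|\,\mathcal{O}_k^+\bigr]\;=\;\mathbb{E}\bigl[\|x_k-\hat x_k\|^2\,\big|\,\mathcal{O}_k^+\bigr]\;+\;\|\hat x_k-\tilde x_k\|^2,
\end{equation*}
because the cross-term $2(\hat x_k-\tilde x_k)^\top\mathbb{E}[x_k-\hat x_k\mid\mathcal{O}_k^+]$ vanishes by the very definition of $\hat x_k$. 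The right-hand side is minimized pointwise (in $\mathcal{O}_k^+$) by $\tilde x_k=\hat x_k$, yielding $\mathbb{E}\|x_k-\tilde x_k\|^2\ge\mathbb{E}\|x_k-\hat x_k\|^2$ for every $k$.

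\textbf{Passing to the infinite-horizon criterion.} Averaging the last inequality over $k=1,\ldots,T$ and invoking Fubini gives
\begin{equation*}
\frac{1}{T}\sum_{k=1}^{T}\mathbb{E}\|x_k-\tilde x_k\|^2\;\ge\;\frac{1}{T}\sum_{k=1}^{T}\mathbb{E}\|x_k-\hat x_k\|^2.
\end{equation*}
What remains is to swap the outer $\mathbb{E}$ and $\limsup_T$ appearing in~\eqref{def:J_E}; this is where I expect the main technical friction, since the per-step errors need not be uniformly bounded and neither reverse Fatou nor dominated convergence applies out of the box. The cleanest workaround is to exploit the pathwise identity
\begin{equation*}
\tfrac{1}{T}\textstyle\sum_{k=1}^{T}\|x_k-\tilde x_k\|^2 \;=\; \tfrac{1}{T}\textstyle\sum_{k=1}^{T}\|x_k-\hat x_k\|^2 \;+\; \tfrac{1}{T}\textstyle\sum_{k=1}^{T}\|\hat x_k-\tilde x_k\|^2 \;+\; \tfrac{2}{T}\textstyle\sum_{k=1}^{T} Z_k,
\end{equation*}
with $Z_k\triangleq(x_k-\hat x_k)^\top(\hat x_k-\tilde x_k)$. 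The second sum is pathwise nonnegative, while $\mathbb{E}[Z_k\mid\mathcal{O}_k^+]=0$ makes $\{Z_k\}$ a martingale-difference sequence whose Cesaro means vanish in $L^1$ under the integrability supplied by Assumption~\ref{asmpt:assumption-stability} (which ensures $\sup_k\mathbb{E}\|x_k-\hat x_k\|^2<\infty$). Taking $\limsup_T$ of the identity, then expectations, and discarding the two nonnegative contributions, yields $\mathpzc{E}(\mathbf{f},\mathbf{g})\ge\mathpzc{E}(\mathbf{f},\mathbf{g}^\ast)$ with $g_k^\ast(\mathcal{O}_k^+)=\hat x_k$, establishing the claim.
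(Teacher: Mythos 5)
Your core argument is exactly the paper's: the paper gives no more proof of Lemma~\ref{lemma:opt-estimator} than the observation that $\mathpzc{W}(\mathbf{f})$ is independent of $\mathbf{g}$ (so the estimator only affects $\mathpzc{E}$) together with a citation of optimal filtering theory, i.e.\ the conditional mean minimizes the mean-square error. Your decoupling remark (that $\tilde x_k$ never feeds back into $(x_k,h_k,\gamma_k,u_k)$, so the law of $(x_k,\mathcal{O}_k^+)$ is the same for every $\mathbf{g}$ once $\mathbf{f}$ is fixed) and the orthogonality decomposition are precisely the content the paper is implicitly invoking, so in that sense the proposal is on target and even more explicit than the source.

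Where your write-up overreaches is the final step, which you correctly identify as the friction point but then resolve with an argument that does not go through as stated. First, $L^1$ convergence of the Ces\`aro means of $Z_k$ cannot control the \emph{pathwise} $\limsup_T$ sitting inside the outer expectation in~\eqref{def:J_E}; you would need almost-sure convergence (e.g.\ a martingale strong law), which in turn requires per-step moment bounds. Second, the martingale-difference claim itself is delicate: $\mathbb{E}[Z_k\mid\mathcal{O}_k^+]=0$ does not make $\{Z_k\}$ an MDS with respect to a filtration to which it is adapted, since $Z_{k-1}$ involves $x_{k-1}$, which is not $\mathcal{O}_k^+$-measurable when $\gamma_{k-1}=0$, and enlarging the conditioning to include past states destroys the orthogonality $\mathbb{E}[x_k-\hat x_k\mid\cdot]=0$. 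Third, the moment bounds you invoke are not available: Assumption~\ref{asmpt:assumption-stability} guarantees bounded error only under (near-)maximal power, whereas the lemma is claimed for \emph{every} $f_k$ (e.g.\ $f_k\equiv 0$ with $|\lambda_{\max}(A)|>1$ gives unbounded error), and nothing bounds $\|\hat x_k-\tilde x_k\|$ for an arbitrary competing measurable $g_k$, so $Z_k$ need not even be integrable. Since the paper itself never supplies these details, this is not a deviation from the paper's route, but if you want the infinite-horizon statement to be airtight you should either restrict to policies with finite cost and argue pathwise (e.g.\ via $\limsup(A_T+B_T)\ge\limsup A_T+\liminf B_T$ after establishing a.s.\ vanishing of the cross term under explicit moment hypotheses), or take expectations before the limit, which is the cleaner per-$k$ statement your orthogonality step already delivers.
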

Problem~\eqref{eqn:hiddenproblem} still remains hard since
$g_k^\ast$
depends on the choice of $f_{1:k}$. To address this issue, by
viewing the problem
from the perspective of a decision maker holding the common information~\cite{nayyar2011sequential}, we formulate~\eqref{eqn:hiddenproblem}
as a POMDP~\cite{Cassandra1998POMDP} at the decision maker's side. Following the conventional treatment of the POMDP, we are allowed to equivalently study its belief-state MDP~\cite{Cassandra1998POMDP}. For technical reasons,
 we pose two moderate constraints on the action space.
We will  present the formal belief-state MDP model and remark
that the resulting gap between the formulated belief-state
MDP and~\eqref{eqn:hiddenproblem} is negligible (see Remark~\ref{remark:gapComments}).
Before doing so, a few definitions and notations are needed.}
Define innovation $e_k$ as
\begin{equation} \label{eqn:Innovation}
e_k\triangleq x_k-
A ^{k-\tau(k)}x_{\tau(k)}
\end{equation}
with $e_k$ taking values in $\mathbb{R}^n$ and $\tau_k$ being the most recent time the remote estimator received data before time $k$ as
\begin{equation}
\tau(k)\triangleq \max_{1\leqslant t \leqslant k-1}\{t :
\gamma_t=1\}.   \label{eqn:tauk}
\end{equation}
Let $\hat e_k\triangleq\mathbb{E}_{f_{1:k}}[e_k|\mathcal{O}^+_k].$
Since $\tau(k), x_{\tau(k)}\in\mathcal{O}^+_{k-1}$, the equality
\begin{equation}\label{eqn:xk-ek}
e_k-\hat e_k=x_k-\hat x_k
\end{equation}
holds for all $k\in\mathbb{N}$.
In other words, $e_k$ can be treated as $x_k$ offset
by a variable that is measurable to $\mathcal{O}^+_{k-1}$.
We define the belief state on $e_k$. From~\eqref{eqn:xk-ek}, the belief state on $x_k$ can be equally defined.
Here we use $e_k$ instead of $x_k$ for ease of presentation.
\begin{definition}
Before the transmission at time $k$, the
belief state $\theta_k(\cdot):\mathbb{R}^n\mapsto\mathbb{R}_+$ is defined
as $\theta_k(e)\triangleq p(e_k; e|f_{1:k}, \mathcal{O}_{k-1}^+)$.
\end{definition}
We also need some definitions related to a partition of a set.
\begin{definition}
A collection $\Delta$ of sets is a partition of a set $\mathcal X$ if the following conditions are satisfied:
\begin{enumerate}
  \item[(i).] $\emptyset \not\in \Delta$.
  \item[(ii).] $\cup_{\mcal B\in\Delta}  \mcal B =  \mcal X$.
  \item[(iii).] If $\mcal B_1,\mcal B_2\in \Delta$ and $\mcal B_1\not=\mcal B_2$, then $\mcal B_1 \cap \mcal B_2 = \emptyset$.
\end{enumerate}
An element of $\Delta$ is also called a cell of $\Delta$.
If $\mcal X \subset\mathbb{R}^n$, we define the size of $\Delta$ as $$|\Delta| \triangleq \sup_{\delta_j,x,y}\{\|x-y\|: x,y\in\delta_j,
\delta_j\in\Delta\}.$$
\end{definition}
\begin{definition}
For two partitions, denoted as $\Delta_1$ and $\Delta_2$, of a set $\mathcal{X}$, $\Delta_1$ is called a refinement of $\Delta_2$ if every cell of $\Delta_1$ is a subset of some cell of $\Delta_2$. Formally it is written as $\Delta_1 \preceq \Delta_2$.
\end{definition}
{One can verify that the relation $\preceq$ is a partial order, and the set of partitions together with this relation form a lattice. We denote the infimum of partitions $\Delta_1$ and $\Delta_2$ as $\Delta_1\wedge\Delta_2$.}



Now we are able to mathematically describe the belief-state MDP  by
a quintuplet
$(\mathbb{N},\mathcal{S},\mathcal{A}, \mathpzc P, \mathpzc C)$. Each
item in the tuple is elaborated as follows. We provide background knowledge of the mathematical notions
used below in the Appendix A.
\begin{enumerate}
\item[(\rmnum{1}).]  The set of decision epochs
is $\mathbb{N}$.

\item[(\rmnum{2}).] State space $\mathcal{S}=\Theta\times \mathbbm{h}$: $\Theta$ is the set of beliefs
over $\mathbb{R}^n$, i.e., the space of probability measures on $\mathbb{R}^n$. The set $\Theta$ is further constrained as follows. Let $\mu$ be a generic element of $\Theta$. Then $\mu$ is equivalent to the Lebesgue measure\footnote{  Let $\mu_1$ and $\mu_2$ be measures on the same measurable space.  Then $\mu_1$ is said to be equivalent to $\mu_2$ if for any Borel subset~$\mcal B$, $\mu_2(\mcal B) = 0 \Leftrightarrow \mu_1(\mcal B)=0$.},  and $\mu$ has the finite second moment, i.e., $\int_{\mathbb{R}^n} \|e\|^2 \mathrm{d} \mu(e) < \infty$.
No generality has been lost by the above two constraints; see Remark~\ref{remark-MDPModel1}.
Let $\theta(e) = \frac{\mathrm{d}\mu(e)}{\mathrm{d}\mathfrak{L}(e)}$ be the Radon--Nikodym derivative~\cite{durrett2010probability}. Note that $\theta(e)$ is uniquely defined up to a $\mathfrak{L}-$null set (i.e., a set having Lebesgue measure zero). We thus use $\mu$ and $\theta(e)$ interchangeably to represent a probability measure on $\mathbb{R}^n$, and we do not distinguish between any two functions $\theta(e)$ and $\theta'(e)$ with $\mathfrak{L}(\{e: \theta(e)-\theta(e)'\not= 0\}) = 0$ by convention.
We assume that $\Theta$ is endowed with the topology of weak convergence. Denote by $s\triangleq (\mu,h)$ a generic element of $\mathcal{S}$.
Let $\mathbbm{d}_P(\cdot,\cdot)$ denote the Prohorov metric~\cite{billingsley2013convergence} on $\Theta$. We define the metric on $\mathcal{S}$ as $\mathbbm{d}_s((\mu_1,h_1),(\mu_2,h_2))=\max
\{\mathbbm{d}_P(\mu_1,\mu_2),|h_1-h_2|\}.$
\item[(\rmnum{3}).] Action space $\mathcal{A}$ is the
set of all functions that have the following structure:
\begin{align} \label{eqn:StrucActionSatur}
a(e)=&\left\{
        \begin{array}{ll}
            \bar{u}, & \text{if $\|e\| > L$}, \\
            a'(e), & \text{otherwise},
        \end{array}
    \right.
\end{align}
where $a'\in\mathcal{A}': \mathcal{E} \mapsto \mathcal U$ with $\mathcal{E} \triangleq \{e\in\mathbb{R}^n: \|e\| \leq L\}$. The space $\mathcal{A}'$ is further defined as follows.
Let $a'\in\mathcal{A}'$ be a generic element, then there exists a finite partition $\Delta_{a'}$ of $\mathcal{E}$ such that each cell of $\Delta_{a'}$ is a $\mathfrak{L}-$continuity set\footnote{A Borel subset $\mcal B$ is said to be a $\mu-$continuity set if $\mu(\partial \mcal B) = 0$, where $\partial \mcal B$ is the boundary set of $\mcal B$.} and on each cell $a'(e)$ is  Lipschitz continuous with Lipschitz constant uniformly bounded by $M$. It is further assumed that $\overline{\Delta} = \wedge_{a'\in\mathcal{A}'} \Delta_{a'}$ is a finite partition of $\mathcal{E}$. Since both $L$ and $M$ can be arbitrarily large and $|\overline{\Delta}|$ can be arbitrarily small, the structure constraints pose little limitation; see Remark~\ref{remark-MDPModel2} why we consider such an action space.
We consider the Skorohod distance defined in~\eqref{eqn:SkorohodDistance}. By convention, we do not distinguish two functions in $\mathcal{A}$ that have zero distance and we consider the space of the resulting equivalence classes.
Note that the argument of the function $a(\cdot)$ is the innovation $e_k$ defined in~\eqref{eqn:Innovation}, and by the definition of $e_k$, one obtains that $a_k(e) = l_k(e+A ^{k-\tau(k)}x_{\tau(k)})$.

\item[(\rmnum{4}).] The function $\mathpzc  P(\theta', h'|\theta, h,a):\mathcal{S}\times \mathcal{A}
\times \mathcal{S}$ defines the conditional state transition
probability.
To be precise,
\begin{align*}
&\mathpzc P(\theta', h'|\theta, h, a)\\
&\triangleq p(\theta_{k+1}, h_{k+1};\theta', h'|\theta_k=\theta, h_k=h, a_{k}=a)\\
&=\left\{\begin{array}{l}
\pi(h'|h)\left( 1-\varphi(\theta, h,a)  \right)
\delta_{\phi(\theta,h,a,0)}(\theta'),\text{~if\:}
\theta'=\phi(\theta,h,a,0),\\
\pi(h'|h)\varphi(\theta, h,a)
\delta_{\phi(\theta,h,a,1)}(\theta'), ~\text{if\:}
 \theta'=\phi(\theta,h,a,1),  \\
0, \qquad \qquad \qquad \qquad \text{otherwise},
\end{array}\right.
\end{align*}
where $\varphi(\theta, h,a) \triangleq \int_{\mathbb{R}^n}q(a(e),h)\theta(e)\mathrm{d}e $, $\delta_{\theta_\ast}(\theta)$ denotes a degenerate distribution
at $\theta_\ast$ over $\Theta$ with $\int_{\Theta}\delta_{\theta_\ast}(\theta){\rm d}\theta=1$, and
\begin{align*}
&\phi(\theta, h,a, \gamma)\\
&\triangleq \left\{
\begin{array}{lll}
\frac{1}{\mathrm{det}(A)}\theta^{+}_{\theta,h,a}(A^{-1}e) * \mathscr{N}_{0,W}(e),
& \hbox{if~} \gamma=0,\\
\mathscr{N}_{0,W}(e), &
\hbox{if~} \gamma=1,
\end{array}\right.  \addtag  \label{eqn:BliefStateUpdate}
\end{align*}
where  $\theta^{+}_{\theta,h,a}(e) \triangleq \frac{(1-q(a(e),h))\theta(e)}{
1-\varphi(\theta, h,a)}$ is interpreted as the post-transmission belief when the transmission fails, and $\mathscr{N}_{0,W}(e)$ is the multivariate Gaussian distribution with mean $0$ and covariance $W$.

\item[(\rmnum{5}).] The function $\mathpzc C(\theta,h,a):\mathcal{S}\times \mathcal{A} \to \mathbb{R}_+
$ is the cost function when performing $a\in \mathcal{A}$ for
$\theta\in\Theta$ and $h\in\mathbbm{h}$ at time $k$, which is given by
\begin{equation}\label{eqn:cost-func}
\mathpzc C(\theta, h, a)=\int_{\mathbb{R}^n}\theta(e)c(e, h, a){\rm d} e.
\end{equation}
In~\eqref{eqn:cost-func}, the function
$c(\cdot,\cdot,\cdot): \mathbb{R}^n\times \mathbbm{h}\times \mathcal{A}\mapsto \mathbb{R}_+$ is defined
as $c(e, h, a)=\alpha a(e)+(1-q(a(e),h))\|e-\hat{e}_+\|^2$ with $\hat{e}_+ = \mathbb{E}_{\theta^{+}_{\theta,h,a}}[e] \triangleq  \mathbb{E}[e|e\sim\theta^{+}_{\theta,h,a}]$, where
the communication cost is counted by the first term and
the distortion $\|e-\hat e_+\|^2$ with probability $1-q(a(e),h)$
is counted by the second term.
\end{enumerate}
\begin{remark} \label{remark-MDPModel1}
The initial belief $\theta_1(e) = 1/\mathrm{det}(A)\mu_{x_0}(A^{-1}e) * \mathscr{N}_{0,W}(e)$ is equivalent to the Lebesgue measure. The belief evolution in~\eqref{eqn:BliefStateUpdate} gives that, whatever policy is used, $\theta_k$ is equivalent to the Lebesgue measure for
$k\geq 2$. Also, note that if there exists a channel gain $h\in\mathbbm{h}$ such that $q(\bar{u},h)<1$ and if $\theta$ has infinite second moment, then $\mathpzc C(\theta, h, a) = \infty$ for any action $a$. Thus, to solve~\eqref{eqn:hiddenproblem}, without any performance loss, we can restrict beliefs into the state space $\Theta$.
\end{remark}

\begin{remark} \label{remark-MDPModel2}
The action $a(e)\in\mathcal{A}$ is allowed to have a $\mathfrak{L}-$null set of discontinuity points. The assumption that on each cell of a partition, $a(e)$ is a Lipschitz function is a technical requirement for Theorem~\ref{theorem:ExistenceStationary}. The intuition is that given $\theta_k$, except for $\mathfrak{L}-$null set of points, the difference between the power used for $e_k$ and $e_k'$ is at most proportional to the distance between $e_k$ and $e_k'$. The saturation structure in~\eqref{eqn:StrucActionSatur}, i.e., $a(e)=\bar{u}$ if $\|e\| > L$ is also a  technical requirement for Theorem~\ref{theorem:ExistenceStationary}. Intuitively, this ensures that, when the transmission fails, the second moment of the post-transmission belief  $\theta^{+}_{\theta,h,a}(e)$ is bounded by a function of the second moment of $\theta(e)$. The saturation assumption can also be found in~\cite{gatsis2014optimal}.
\end{remark}

An admissible $k-$history for this MDP is defined as $\mathbf{h}_k \triangleq \{\theta_1, h_1,a_1,\ldots,\theta_{k-1},h_{k-1},a_{k-1},\theta_k,h_k\}$. Let $\mathcal{H}_k$ denote the class of all the admissible $k-$history $\mathbf{h}_k$.
A generic policy $\mathbf{d}$ for
$(\mathbb{N},\mathcal{S},\mathcal{A},\mathpzc P,\mathpzc C)$
is a sequence of
decision rules $\{d_{k}\}_{k\in \mathbb{N}}$, with each
$d_k:\mathcal{H}_k\to\mathcal{A}$. In general, $d_k$ may be a stochastic mapping. Let $\mathcal{D}$ denote the whole class of $\mathbf{d}$. In some cases, we may write $\mathbf{d}$ as $\mathbf{d}(d_{k})$ to explicitly point out the decision rules used at each stage. We focus on the following problem.
\begin{problem} \label{problem:2}
Find optimal policy for the  MDP problem:
\begin{align}
&\mathop {\rm minimize}_{\mathbf{d}\in\mathcal{D}} \mathpzc{J}(\mathbf{d},\theta,h), \quad \forall\, (\theta,h) \in \mathcal{S}, \label{eqn:ACMDP}
\end{align}
where
\begin{align*}
&\mathpzc{J}(\mathbf{d},\theta,h) \triangleq
\mathop {\rm limsup}_{T\to \infty}\frac{1}{T}
\sum_{k=1}^{T} \mathbb{E}^{\theta,h}_{\mathbf{d}}\left[
\mathpzc C(\theta_k, h_k, a_k) \right]
\end{align*}
is the average cost with the initial state $(\theta,h)$ and the policy~$\mathbf{d}$.
\end{problem}
{\begin{remark} \label{remark:gapComments}
The gap between Problem~\ref{problem:2} and the optimization problem~\eqref{eqn:hiddenproblem} arises from the structure assumptions for the action space. These structure constraints, however, are moderate, since the saturation level $L$ and the uniform Lipschitz constant $M$ can be arbitrarily large and the size of  $|\overline{\Delta}|$ can be arbitrarily small.
\end{remark}}

\begin{center}
\setlength{\unitlength}{1.5mm}
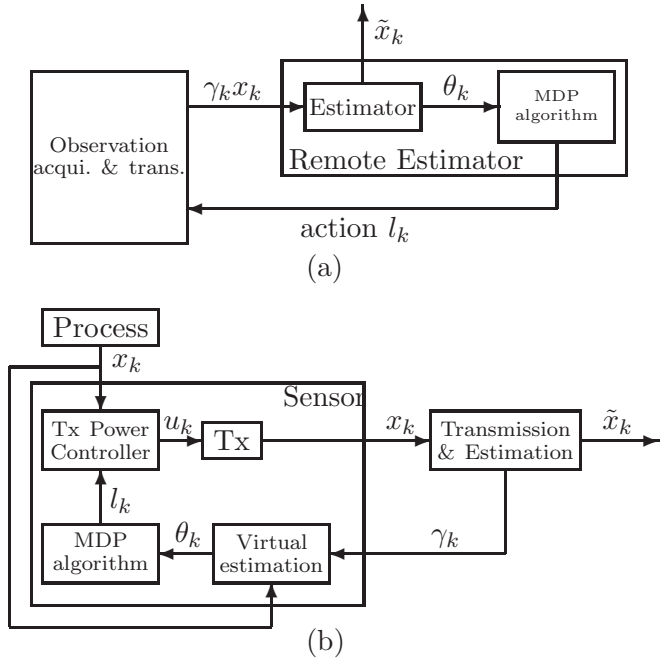
\begin{figure}[t!]
\thicklines
\centering
\begin{picture}
(56,59)(-3,-13)
\thicklines

\put(0,24){\framebox(13.5,15){${\rm Observation\atop acqui.~\&~trans.}$}}

\put(13.5,36){\vector(1,0){10.5}}
\put (15,37){$ \gamma_kx_k$}

{\footnotesize
\put(24,34){\framebox(10,4){${\rm Estimator}$}}
\put(41,33){\framebox(10,6){${\rm MDP \atop algorithm}$}}
\put(22,30){\framebox(30,10)}
}
\put (22.5,30.5){$\rm Remote~Estimator$}
\put(29,38){\vector(0,1){7}}
\put (30,42){$\tilde x_k$}
\put(34,36){\vector(1,0){7}}
\put (36,37){$\theta_k$}

\put(46,33){\line(0,-1){6}}
\put(46,27){\vector(-1,0){32.5}}
\put (23.5,24.5){${\rm action}~l_k$}
\put (24,21){$\rm (a)$}

\put(1,15){\framebox(10,3){${\rm Process}$}}
\put(7,13){$x_k$}
\put(6,15){\vector(0,-1){6}}
\put(1,4){\framebox(10,5){${\rm Tx~Power\atop Controller}$}}
\put(15,5){\framebox(5,3){${\rm Tx}$}}
\put(11,6.5){\vector(1,0){4}}
\put(11.5,7.5){$u_k$}

\put(1,-6){\framebox(10,5){${\rm MDP \atop algorithm}$}}
\put(6,-1){\vector(0,1){5}}
\put (6.8,0.5){$l_k$}

\put(16,-6){\framebox(10,5){${\rm Virtual \atop estimation}$}}
\put(16,-3.5){\vector(-1,0){5}}
\put (12.5,-2.5){$\theta_k$}

\put(20,6.5){\vector(1,0){15}}
\put (31,7.5){$x_k$}

\put(35,4){\framebox(13,5){${\rm Transmission}\atop \&~{\rm Estimation}$}}
\put(48,6.5){\vector(1,0){7}}
\put (50,7.5){$\tilde x_k$}

\put(41.5,4){\line(0,-1){7.5}}
\put(41.5,-3.5){\vector(-1,0){15.5}}
\put (35,-2.5){$\gamma_k$}

\put(0, -8){\framebox(29,19.5)}
\put (22,9.5){$\rm Sensor$}
\put(6,13){\line(-1,0){8}}
\put(-2,13){\line(0,-1){23}}
\put(-2,-10){\line(1,0){23}}
\put(21,-10){\vector(0,1){4}}

\put (24,-12){$\rm (b)$}

\end{picture}
\caption{Implementation of the system. The block ``Observation acqui. $\&$ trans." in (a) corresponds to the blue-dashed rectangle in Figure~\ref{fig:system} and the block ``Transmission $\&$ estimation" the red-dashed rectangle. In (a), the MDP algorithm is run at the remote estimator and the action $l_k$ is fed back to the sensor. While in (b), the MDP algorithm is run at the sensor node and $\gamma_k$ is fed back by the remote estimator.
}
\label{implementation-model}
\end{figure}
\end{center}
\vspace{-0.5cm}
{
\subsection{Practical Implementation}\label{sec:practical-implementation}
Here we discuss about the implementation of the system, which is illustrated in Figure~\ref{implementation-model}. Depending on the computational capacity of the senor node, the system we study can work either as in (a) or in (b).}
The main difference between the systems in (a) and (b) is where the MDP algorithm is run. The computational capacity required for the sensor node as well as the content of feedback messages are correspondingly different. In (a), the MDP algorithm is run at the remote estimator and the action $l_k$ is fed back to the sensor.
In practice, for a generic $l_k$, only an approximate version (e.g., lookup tables) can be transmitted due to bandwidth limitation.
{An accurate feedback of $l_k$ is possible if $l_k$ has a special structure. For example, if $a_k(e)$ (recall that that $a_k(e) = l_k(e+A ^{k-\tau(k)}x_{\tau(k)})$) is a monotonic step function taking values in a finite set, only those points, where $a_k$ jumps,  are needed to represent $l_k$ (note that $A ^{k-\tau(k)}x_{\tau(k)}$ is available at the sensor node).}
Since the function $l_k$ is directly fed back to the sensor,
the only computational task carried out by the sensor is computing $l_k(x_k)$. When the sensor node is capable of running the MDP algorithm locally, the system can be implemented as illustrated in (b). In this case, only $\gamma_k$ (a binary variable) is fed back. Note that when $\gamma_k$ is fed back, the sensor knows exactly the information available at the remote estimator. It can run a virtual estimator locally that has the same behavior as the remote estimator.


\section{Optimal Deterministic Stationary Policy: Existence} \label{sec:ExistOptPolicy}

The definition of the policy $\mathbf{d}$ in the above section allows the dependence of $d_k$ on the full $k-$history, $\mathbf{h}_k$. Fortunately, with the aid of the results of average cost MDPs~\cite{schal1993average, feinberg2012average, hernandez2012discrete}, we prove that there exists a deterministic stationary policy that is optimal to Problem~\ref{problem:2}. Before showing the main theorem, we introduce some notations.

We define the class of deterministic and stationary policies $\mathcal{D}_{\rm ds}$ as follows:
$\mathbf{d}(d_{k}) \in\mathcal{D}_{\rm ds}$ if and only if there exists a Borel measurable function $d: \mathcal{S} \mapsto \mathcal{A}$ such that $\forall i$,
\begin{align*}
d_k(\mathcal{H}_{k-1},a_{k-1},\theta_k =\theta, h_k=h) = d( \theta, h).
\end{align*}
Since the decision rules $d_k$'s are identical (equal $d$) along the time horizon for a stationary policy $\mathbf{d}(\{d_{k}\}_{k\in \mathbb{N}}) \in\mathcal{D}_{\rm ds}$, we write it as $\mathbf{d}(d)$ for the ease of notation.

\begin{theorem}  \label{theorem:ExistenceStationary}
There exists a deterministic and stationary policy $\mathbf{d}^{*}(d)\in \mathcal{D}_{\rm ds}$  such that
\begin{align*}
\mathpzc{J}(\mathbf{d}^{*}(d),\theta,h)  \leq \mathpzc{J}(\mathbf{d},\theta,h) \quad \forall\, (\theta,h) \in \mathcal{S}, \mathbf{d}\in \mathcal{D},
\end{align*}
Moreover,
\begin{align} \label{eqn:OptimalPolicy}
\mathbf{d}^{*}(d) = \mathop{\arg\min}_{\mathbf{d} \in \mathcal{D}_{\rm ds}} \{\mathpzc C_d(\theta, h) - \rho^*(\theta,h) + \mathbb{E}_{\bf d}[\mathpzc Q^*(\theta', h')|{\theta,h} \},
\end{align}
and
\begin{equation*}
\mathpzc{J}(\mathbf{d}^{*}(d),\theta,h)=\rho^*(\theta,h),
\end{equation*}
where the functions $\mathpzc Q^*: \mathcal{S} \mapsto \mathbb{R}$ and $\rho^*: \mathcal \mapsto \mathbb R$ satisfy
\[\mathpzc Q^*(\theta, h) = \min_{\mathbf{d} \in \mathcal{D}_{\rm ds}} \{
\mathpzc C_d(\theta, h) - \rho^*(\theta,h) + \mathbb{E}_{\bf d}[\mathpzc Q^*(\theta', h')|{\theta,h}]\}\]
with $\mathpzc C_d(\theta, h) \triangleq \mathpzc C(\theta, h, d(\theta, h))$ and
$\mathbb{E}_{\bf d}[\mathpzc Q^*(\theta', h')|{\theta,h}] \triangleq \int_{\mathcal{S}} \mathpzc Q^*(\theta', h') \mathpzc P(\theta',h'|\theta,h, d(\theta,h)) \mathrm{d}(\theta',h')$.
\end{theorem}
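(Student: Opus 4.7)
The plan is to follow the vanishing discount approach of Schäl and Hernández-Lerma for average cost MDPs: establish continuity-compactness conditions for the model, use the stability assumption (Assumption~\ref{asmpt:assumption-stability}) to bound the relative value functions, extract the average cost optimality equation (ACOE) in the limit $\beta \to 1$, and finally apply a measurable selection theorem to obtain $\mathbf{d}^*(d) \in \mathcal{D}_{\mathrm{ds}}$.

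First I would verify the Schäl-type conditions for the MDP $(\mathbb{N},\mathcal{S},\mathcal{A},\mathpzc P,\mathpzc C)$: (i) the action space $\mathcal{A}$ is compact under the Skorohod metric; this is the raison d'être of the technical constraints defining $\mathcal{A}$ (common finite partition $\overline{\Delta}$ with $\mathfrak{L}$-continuity-set cells, uniform Lipschitz bound $M$ on each cell, saturation at level $L$), which together with Arzelà-Ascoli on each cell give sequential compactness in the Skorohod topology; (ii) the cost $\mathpzc C(\theta,h,a) = \alpha \int a(e)\theta(e)\mathrm{d}e + \int(1-q(a(e),h))\|e-\hat e_+\|^2\theta(e)\mathrm{d}e$ is lower semicontinuous jointly in $(\theta,h,a)$, which follows from almost-everywhere continuity of $q(\cdot,h)$ (Assumption 2), continuity of elements of $\mathcal{A}$ off a Lebesgue-null set, and the requirement that every $\theta \in \Theta$ have finite second moment; (iii) the kernel $\mathpzc P(\cdot|\theta,h,a)$ in~\eqref{eqn:BliefStateUpdate} is weakly continuous in $(\theta,h,a)$, which reduces to continuity of $\varphi(\theta,h,a)$ and of the post-failure update $\theta^+_{\theta,h,a}$ via a dominated-convergence argument along any Skorohod-weak convergent sequence $(a_n,\theta_n) \to (a,\theta)$.

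Next I would establish the stability/boundedness hypothesis needed to pass to the limit. Using Assumption~\ref{asmpt:assumption-stability} (and irreducibility-aperiodicity from Assumption~\ref{asmpt:CommunicationModel}-(v)), the reference policy ``transmit at $\bar u$ always'' induces a geometrically stable belief-state chain in the sense that $\mathbb{E}\bigl[\int \|e\|^2\theta_k(e)\mathrm{d}e\bigr]$ is bounded uniformly in $k$. This yields Schäl's condition (B): $\sup_{\beta\in(0,1)}(1-\beta)V_\beta(\theta_0,h_0) < \infty$ at a reference state, and, combined with the cost lower bound $\mathpzc C \geq 0$ and an upper bound in terms of the second moment of $\theta$, it also gives uniform boundedness of the relative value functions $h_\beta(\theta,h) \triangleq V_\beta(\theta,h) - V_\beta(\theta_0,h_0)$. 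For each $\beta \in (0,1)$, standard discounted MDP theory under (i)-(iii) guarantees existence of a stationary optimal policy and a Bellman equation. Taking a subsequence $\beta_n \to 1$ with $(1-\beta_n)V_{\beta_n}(\theta_0,h_0) \to \rho^*$ and $h_{\beta_n} \to \mathpzc Q^*$, Schäl's theorem delivers the ACOE
\begin{align*}
\mathpzc Q^*(\theta,h) = \min_{\mathbf{d}\in\mathcal{D}_{\rm ds}}\bigl\{\mathpzc C_d(\theta,h) - \rho^*(\theta,h) + \mathbb{E}_{\mathbf{d}}[\mathpzc Q^*(\theta',h')|\theta,h]\bigr\}.
\end{align*}
A Borel measurable selection theorem (Brown-Purves, or Hernández-Lerma, Proposition D.5) applied to the right-hand side, which is lower semicontinuous in $a$ over the compact $\mathcal{A}$, produces a Borel map $d^*: \mathcal{S} \to \mathcal{A}$ attaining the minimum; the standard verification argument against ACOE then proves optimality of $\mathbf{d}^*(d^*)$ not merely within $\mathcal{D}_{\rm ds}$ but over the full class $\mathcal{D}$ of history-dependent randomized policies.

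The main obstacle will be Step (i)-(iii) above, in particular the joint weak continuity of $\mathpzc P$ and the semicontinuity of $\mathpzc C$ in the pair (action, belief) when the action lives in an abstract function space under the Skorohod topology and the belief lives in a space of probability measures under weak convergence. Concretely, one must control integrals of the form $\int f(e,a(e))\theta(e)\mathrm{d}e$ along sequences $(a_n,\theta_n)$ converging jointly; the structural constraints baked into $\mathcal{A}$ (finite common partition into $\mathfrak{L}$-continuity-set cells, uniform Lipschitz bound, saturation outside $\{\|e\|\leq L\}$) are tailored precisely to make this argument go through, and the supporting lemma announced in the appendix is presumably what carries the technical burden of the compactness and weak continuity claims.
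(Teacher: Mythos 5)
Your proposal follows essentially the same route as the paper: it verifies Schäl's continuity--compactness conditions (Borel/compactness of the specially constructed action space via the partition-plus-uniform-Lipschitz structure, lower semicontinuity of $\mathpzc C$, weak continuity of $\mathpzc P$ along jointly convergent Skorohod/weak sequences) and then the relative discounted value function bound via the always-transmit-at-$\bar u$ reference policy together with Assumption~\ref{asmpt:assumption-stability} and the irreducibility of the channel chain, before invoking the vanishing-discount theorem. The only place the paper works harder than your sketch is the uniform-in-$\beta$ bound on $\upsilon_\beta(\theta,h)-m_\beta$, which it obtains by an explicit stopping-time argument (hitting the reset belief $\mathscr{N}_{0,W}$ at the best channel state, bounded via a geometric success count and Wald's identity) rather than directly from second-moment stability, but the ingredients you name are exactly those used there.
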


The above theorem says that the optimal power transmission policy exists and is deterministic and stationary, i.e., the power used at the sensor node $u_k$ only depends on $(\theta_k,h_k)$ and $e_k$.
Since the belief state $\theta_k$ can be updated recursively as in~\eqref{eqn:BliefStateUpdate}, this property facilitates the related performance analysis. In principle, the optimal deterministic and stationary policy to an average cost MDP can be obtained by the value iteration algorithm~\cite{hernandez2012discrete}. However, it is not computationally tractable to solve~\eqref{eqn:OptimalPolicy}, since neither the state space nor the action space is finite. An approximate algorithm is needed to obtain an suboptimal solution~\cite{Cassandra1998POMDP}, which is out of the scope of this paper. Nevertheless, Theorem~\ref{theorem:ExistenceStationary} provides a qualitative characteristic of the optimal transmission power control rule.


Define
\begin{align*}
\mathpzc{J}_\beta(\mathbf{d},\theta,h) \triangleq
\mathbb{E}_{\theta,h}^{\mathbf{d}}\left[\sum_{k=1}^{\infty} \beta^k\mathpzc C(\theta_k, h_k, a_k)\right]
\end{align*}
as the expected total discounted cost with the discount factor $0<\beta<1$. Let $\upsilon_\beta(\theta,h) \triangleq  \inf_{\mathbf{d}\in\mathcal{D}}\mathpzc{J}_\beta(\mathbf{d},\theta,h)$ be the least cost associated with the initial state $(\theta,h)$, and let $m_\beta = \inf_{(\theta,h)\in\mathcal{S}} \upsilon_\beta(\theta,h).$

By Theorem 3.8 in~\cite{schal1993average}, in order to prove Theorem~\ref{theorem:ExistenceStationary}, it is sufficient to verify the following conditions.

\begin{description}
  \item[C1] {\em (State Space)} The state space $\mathcal{S}$ is locally compact with countable base.
  \item[C2] {\em (Regularity)} Let $\mathpzc{M}$ be a mapping assigning to each $s\in\mathcal{S}$ the nonempty available action space $\mathcal{A}(s)$. Then for each $s\in\mathcal{S}$, $\mathcal{A}(s)$ is compact, and $\mathpzc{M}$ is upper semicontinuous.
  \item[C3] {\em (Transition Kernel)} The state transition kernel $\mathpzc P(\cdot|s,a)$ is weakly continuous\footnote{We say $\mathpzc P(\cdot|s,a)$ is is weakly continuous if as $i\to\infty$,
      \begin{align*}
      \int_{\mathcal{S}}b(s')\mathpzc P(\mathrm{d} s'|s_i,a_i)\to \int_{\mathcal{S}}b(s')\mathpzc P(\mathrm{d} s'|s,a)
      \end{align*}
  for any sequence $\{(s_i,a_i), i\geq 1\}$ converging to $(s,a)$ with $s_i,s\in\mathcal{S}$ and $a_i,a\in\mathcal{A}$, and for any bounded and continuous function $b:\mathcal{S}\mapsto \mathbb{R}$.}.
  \item[C4] {\em (Cost Function)} The one stage cost function $\mathpzc C(s,a)$ is lower semicontinuous.
  \item[C5] {\em (Relative Discounted Value Function)}  There holds \begin{align}\label{eqn:ExistConditionB0}
\sup_{0<\beta<1}[\upsilon_\beta(\theta,h)-m_\beta] < \infty,\: \forall (\theta,h)\in\mathcal{S}.
\end{align}
\end{description}

We now verify each of the above conditions for the considered problem, by which we establish the proof of Theorem~\ref{theorem:ExistenceStationary}.

\subsection{State Space Condition C1}
We prove that both $\mathcal{S}$ and $\mathcal{A}$ are Borel subsets of Polish spaces (i.e., separable completely metrizable topological spaces) instead. Then as pointed out in~\cite{feinberg2012average}, by Arsenin--Kunugui Thoerem, the condition C1 holds.

To show that $\mathcal{S}$ is a Borel subset of a Polish space,
by the well known results about the product topology~\cite{MInyan2010Topology}, it suffices to prove that $\Theta$ and $\mathbbm{h}$ are Borel subsets of Polish spaces.
Since $\mathbbm{h}$ is a compact subset of $\mathbb{R}$, we only need to prove $\Theta$ is a Borel subset of a Polish space. Let $\mathcal {M}(\mathbb{R}^n)$ be the space of probability measures on $\mathbb{R}^n$ endowed with the topology of weak convergence. It is well known that $\mathcal{M}(\mathbb{R}^n)$ is a Polish space~\cite{billingsley2013convergence}. Let $ \mathcal{M}_2(\mathbb{R}^n) \subseteq \mathcal{M}(\mathbb{R}^n)$ be the set of probability measures with finite second moment, and $\mathcal{M}_{\rm e}(\mathbb{R}^n) \subseteq \mathcal{M}(\mathbb{R}^n)$ be the set of probability measures equivalent to $\mathfrak{L}$. By Theorem 3.5 in~\cite{lange1973borel}, $\mathcal{M}_{\rm e}(\mathbb{R}^n)$ is a Borel set. We then show that $\mathcal{M}_2(\mathbb{R}^n)$ is closed. Suppose $\{\mu_{i, i\in\mathbb{N}}\} \in \mathcal{M}_2(\mathbb{R}^n) $ and $\mu_i \overset{w}{\to} \mu$. Since $\mathcal{M}(\mathbb{R}^n)$ is complete, $\mu\in\mathcal{M}(\mathbb{R}^n)$, and using the fact that norms are continuous, by Theorem 1.1 in~\cite{feinberg2014fatou},
\[\int_{\mathbb{R}^n} \|e\|^2 \mu(\mathrm{d} e) \leq \liminf_{i\to\infty}  \int_{\mathbb{R}^n} \|e\|^2 \mu_i(\mathrm{d} e) < \infty.   \]
Then $\mu\in\mathcal{M}_2(\mathbb{R}^n)$, implying that $\mathcal{M}_2(\mathbb{R}^n)$ is closed. Since $\Theta = \mathcal{M}_2(\mathbb{R}^n) \cap \mathcal{M}_{\rm e}(\mathbb{R}^n)$, $\Theta$ is a Borel subset of $\mathcal{M}(\mathbb{R}^n)$.
The state space $\mathcal{S}$ thus is a Borel subset of a Polish space.

Now we shall show that $\mathcal A$ is a Borel subset of
a Polish space.
Since a bounded function can be approximated by simple functions uniformly~\cite{rudin1964principles}, the space $\mathcal{A}'$ is a subset of the general Skorohod space defined on $\mathcal{E}$ (see Appendix A), i.e., $\mathcal{A}' \subseteq \mathcal{D}(\mathcal{E})$. We first prove that the closure of $\mathcal{A}'$, denoted as $\mathrm{cl}(\mathcal{A}')$, is a compact set by Theorem 3.11 in~\cite{straf1972}. Since a generic $a\in\mathcal{A}'$ maps from $\mathcal{E}$ to $[0,\bar{u}]$, the condition 3.37 is obviously satisfied. Note that the condition 3.38 is equivalent to
\begin{align} \label{eqn:SkorohodFuncwLim}
\lim_{\Delta}\sup_{a\in\mathcal{A}'}\mathbbm{w}(a,\Delta) \to 0.
\end{align}
By the definition of $\overline{\Delta}$, \emph{all} the functions in $\mathcal{A}'$ are Lipschitz continuous with Lipschitz constant uniformly bounded by $M$ on each cell of $\overline{\Delta}$. Thus, for $\Delta \preceq\overline{\Delta}$,
\begin{align*}
\sup_{a\in\mathcal{A}'}\mathbbm{w}(a,\Delta) \leq M|\Delta|,
\end{align*}
which yields~\eqref{eqn:SkorohodFuncwLim}. We then show that $\mathrm{cl}(\mathcal{A}') = \mathcal{A}'$. Suppose that $a_i\in\mathcal{A}'$ converges to a limit $a$ in the Skorohod topology (we write as $a_i\overset{s}{\to}a$), we then show that $a\in\mathcal{A}'$. By the definition of the Skorohod distance $\mathbbm{d}(\cdot,\cdot)$ in~\eqref{eqn:SkorohodDistance}, $a_i\overset{s}{\to}a$ if and only if there exist mappings $\bbpi_i\in \Lambda_t$ such that
\begin{align}  \label{eqn:ExistAction_1}
\lim_i a_i(\bbpi_i x) = a(x)\: \text{uniformly in} \:\mathcal{E}
\end{align}
and $\lim_i \bbpi_i x = x$ uniformly in $\mathcal{E}$.
Since $\lim_i \bbpi_i x = x$ uniformly in $\mathcal{E}$, for any $\epsilon>0$, there exists $i_0$ such that $\|\bbpi_i\|_t < \epsilon$ with $i\geq i_0$. Note that if $\|\bbpi_i\|_t<\epsilon$, $\bbpi_i$ is a bi-Lipschitz
homeomorphism. By the definition of $\mathcal{A}'$, any $a_i\in\mathcal{A}'$ has $\mathfrak{L}-$null set of discontinuity points. Since measure-null sets are preserved by a Lipschitz homeomorphism, by~\eqref{eqn:ExistAction_1}, one obtains that
\begin{align}  \label{eqn:ExistAction_2}
\mathfrak{L}(\text{the set of discontinuity points of $a$}) = 0.
\end{align}
Following the same reasoning for one dimensional Skorohod space $\mathscr{D}[0,1]$ (see e.g., P124 in~\cite{billingsley2013convergence}), one obtains that $a_i\overset{s}{\to}a$ implies that $a_i(x)\to a(x)$ uniformly for all continuity points $x$ of $a$. Since on each cell of $\overline{\Delta}=\{\delta_j\}$, \emph{all} the functions in $\mathcal{A}'$ are Lipschitz continuous, the interior points of $\delta_j$ (write the set as $\delta_j^o$) must be continuity points of $a$. By the fact that if a sequence of Lipschitz functions with Lipschitz constant uniformly bounded by $M$ converge to a limit function, then this limit function is also a Lipschitz function with Lipschitz constant bounded by the same $M$, $a$ is Lipschitz continuous with Lipschitz constant uniformly bounded by $M$ on the interior set of each cell of $\overline{\Delta}$. For a boundary point $x$ of the cells of $\overline{\Delta}$, denote the collection of cells whose boundary contains $x$ as $\delta_x \triangleq \{\delta_j: x\in\partial{\delta_j}\}$. Then one obtains that $a(x)$ must be a limit of $a$ from one cell in $\delta_x$, i.e., there exists $\delta_j\in\delta_x$ such that $\lim_{y\to x, y\in\delta^o_j}a(y) = a(x)$. Now we define a function $a^*$ such that for each $\delta_j \in \overline{\Delta}$, $a^*(x) = a(x)$ if $x\in\delta_j^o$ and $a^*(x)$ are continuous on $\delta_j$. Then one obtains that $\mathbbm{d}(a,a^*)=0$, which implies that $a=a^*$ since $\mathscr{D}(\mathcal{E})$ is a metric space.
Combining~\eqref{eqn:ExistAction_2}, one obtains that $a\in\mathcal{A}'$.
Thus $\mathcal{A}'$ is closed and compact.
Using the fact that  every compact metric space is complete and separable, one obtains that $\mathcal{A}'$ is a Polish space. By the structure relation between $\mathcal{A}$ and $\mathcal{A}'$ in~\eqref{eqn:StrucActionSatur}, the space $\mathcal{A}$ is also a Polish space.

\subsection{Regularity Condition C2}
Since $\mathcal{A}$ is compact and $\mathcal{A}(s) = \mathcal{A}$ for every $s\in\mathcal{S}$, C2 is readily verified.

\subsection{Transition Kernel Condition C3}
Since $\mathcal{S}$ is separable and given $(\theta_k,h_k,a)$, $h_{k+1}$ and $\theta_{k+1}$ are independent, then by Theorem 2.8 in~\cite{billingsley2013convergence}, it suffices to prove that for any $h\in\mathbbm{h}$, as $\theta_i \overset{w}{\to} \theta$ ($\mu_i \overset{w}{\to} \mu$) and $a_i\overset{s}{\to}a$, the followings hold:
\begin{align}
\varphi(\theta_i, h,a_i) &\to \varphi(\theta, h,a)\label{eqn:ExistRateConv}  \\
\text{and}\quad\phi(\theta_i, h,a_i, 0) &\overset{w}{\to} \phi(\theta, h,a, 0). \label{eqn:ExistDistConv}
\end{align}
Since $a_i\overset{s}{\to}a$ implies that $a_i(x)\to a(x)$ uniformly for all the continuity points of $a$ and the set of discontinuity points of $a$ has Lebesgue measure zero, $a_i\overset{s}{\to}a$ implies that $a_i\to a$ $\mathfrak{L}-$a.e.
Noting that $\mu$ is equivalent to $\mathfrak{L}$, $a_i\to a$ $\mu-$a.e. holds, and it follows that $q(a_i,h) \to q(a,h)$ $\mu-$a.e., since $q$ is continuous $\mathfrak{L}-$a.e.
Also, by Lemma~\ref{lemma:WeakConv2Setwise}, $\mu_i\overset{sw}{\to}\mu$. Then by Theorem 2.2 in~\cite{hernandez2000fatou}, one obtains that
\begin{align*}
\liminf_{i\to\infty} \int_{\mathbb{R}^n} q(a_i(e),h) \mu_i(\mathrm{d}e) \geq&  \int_{\mathbb{R}^n} q(a(e),h) \mu(\mathrm{d}e) \\
\text{and}\: \liminf_{i\to\infty} \int_{\mathbb{R}^n} -q(a_i(e),h) \mu_i(\mathrm{d}e) \geq&  -\int_{\mathbb{R}^n} q(a(e),h) \mu(\mathrm{d}e).
\end{align*}
Combing the above two equations, one obtains that $\lim_{i\to\infty} \int_{\mathbb{R}^n} q(a_i(e),h) \mu_i(\mathrm{d}e) \to  \int_{\mathbb{R}^n} q(a(e),h) \mu(\mathrm{d}e)$, i.e., $\varphi(\theta_i, h,a_i) \to \varphi(\theta, h,a)$.

We now prove that the equation~\eqref{eqn:ExistDistConv} holds. Noting that $\theta_i\overset{sw}{\to}\theta$ implies that $\theta_i(e){\to}\theta(e)$ $\mathfrak{L}-$a.e., it thus follows that
\begin{align} \label{eqn:ExistPlusConv}
\theta^{+}_{\theta_i,h,a_i}(e) \to \theta^{+}_{\theta,h,a}(e)
\end{align}
$\mathfrak{L}-$a.e. Note that $\theta^{+}_{\theta_i,h_i,a_i}(e)$ and $\theta^{+}_{\theta,h,a}(e)$ can be viewed as probability density functions of $e$, and for simplicity, we write the corresponding probability measures as $\mu^+_i$ and $\mu^+$, respectively. Then it follows from~\eqref{eqn:ExistPlusConv} that
\begin{align} \label{eqn:ExistPlusConv2}
\mu^+_i \overset{sw}{\to} \mu^+.
\end{align}
Let $b(e)$ be any bounded and continuous function defined on $\mathbb{R}^n$, then
\begin{align*}
&\int_{\mathbb{R}^n} b(e)   \phi(\theta, h,a, 0)(e) \mathrm{d}e \\
& = \int_{\mathbb{R}^n} b(e)   \int_{\mathbb{R}^n} \theta^{+}_{\theta,h,a}(e') \mathscr{N}_{0,W}(e-Ae')\mathrm{d}e'      \mathrm{d}e  \\
& = \int_{\mathbb{R}^n} \theta^{+}_{\theta,h,a}(e')  \int_{\mathbb{R}^n} b(e)   \mathscr{N}_{0,W}(e-Ae')\mathrm{d}e     \mathrm{d}e' \\
& \triangleq \int_{\mathbb{R}^n} \tilde{b}(e')\mu^+(\mathrm{d}e'),
\end{align*}
where $\tilde{b}(e') \triangleq \int_{\mathbb{R}^n} b(e)   \mathscr{N}_{0,W}(e-Ae')\mathrm{d}e$. Noting that $\tilde{b}(e')$ is a bounded function, then by Appendix E of~\cite{hernandez2012discrete} and \eqref{eqn:ExistPlusConv2},
\[\int_{\mathbb{R}^n} \tilde{b}(e')\mu^+_i(\mathrm{d}e') \to \int_{\mathbb{R}^n} \tilde{b}(e')\mu^+(\mathrm{d}e'). \]
The equation~\eqref{eqn:ExistDistConv} thus follows by the Portmanteau Theorem.

\subsection{Cost Function Condition C4}
 We first show that $\mu^+$ also has finite second moment given that $\mu$ has finite second moment.
\begin{align*}
&\int_{\mathbb{R}^n} \|e\|^2 \mathrm{d} \mu^+(e) \\
&= \int_{\mathcal{E}} \|e\|^2 \mathrm{d} \mu^+(e) +\int_{\mathbb{R}^n\backslash \mathcal{E}} \|e\|^2 \mathrm{d} \mu^+(e) \\
&\leq L^2 + \int_{\mathbb{R}^n\backslash \mathcal{E}} \|e\|^2 \mathrm{d} \mu(e), \:\text{for any}\: h,a \\
&<\infty,
\end{align*}
where the first inequality  follows from the structure of $a(e)$ in~\eqref{eqn:StrucActionSatur}.
Since $\theta^{+}_{\theta,h,a}$ has finite second moment, $e\sim\theta^{+}_{\theta,h,a}$ is uniformly integrable. Then by Theorem 3.5 in~\cite{billingsley2013convergence},
\begin{align*}
\hat{e}^i_+ \to \hat{e}_+
\end{align*}
where $\hat{e}^i_+ = \mathbb{E}_{\theta^{+}_{\theta_i,h,a_i}}[e].$
Note that
\begin{align*}
\mathpzc C(\theta, h, a)= &\int_{\mathbb{R}^n}\theta(e)c(e, h, a){\rm d} e. \\
=&\int_{\mathbb{R}^n}\alpha a(e)+(1-q(a(e),h))\|e-\hat{e}_+\|^2 \mathrm{d}\mu(e).
\end{align*}
Since $a_i(e)+(1-q(a_i(e),h_i))\|e_i-\hat{e}^i_+\|^2 \geq 0$, then by Theorem 2.2 in~\cite{hernandez2000fatou}, one obtains that
\begin{align*}
 &\int_{\mathbb{R}^n}\alpha a(e)+(1-q(a(e),h))\|e-\hat{e}_+\|^2 \mathrm{d}\mu(e) \\
 & \leq \liminf_{i\to\infty} \int_{\mathbb{R}^n}\alpha a_i(e)+(1-q(a_i(e),h))\|e-\hat{e}^i_+\|^2\mathrm{d}\mu_i(e),
 \end{align*}
which means that $\mathpzc C(\theta, h, a)$ is lower semicontinuous.

\subsection{Relative Discounted Value Function Condition C5}
Note that by Lemma 5 in~\cite{feinberg2012average}, if
\begin{align} \label{eqn:ExistConditionB1}
\inf_{\mathbf{d},\theta,h}\mathpzc{J}(\mathbf{d},\theta,h) < \infty,
\end{align}
then~\eqref{eqn:ExistConditionB0} can be equivalently written as
\begin{align} \label{eqn:ExistConditionB}
\limsup_{\beta \uparrow 1}[\upsilon_\beta(\theta,h)-m_\beta] < \infty,\: \forall (\theta,h)\in\mathcal{S}.
\end{align}

To verify~\eqref{eqn:ExistConditionB1}, consider a suboptimal policy, denoted by $\mathbf{d}^\diamond$, where at each time instant the maximal transmission power $\bar{u}$ is used.
Given a belief $\theta$, denote by $\mathtt{Var}(\theta)$ the second central moment, i.e.,
\begin{align} \label{eqn:define2Centmoment}
\mathtt{Var}(\theta)=\int_{\mathbb{R}^n}\theta(e)(e-\hat{e})(e-\hat{e})^\top{\rm d} e,
\end{align}
where $\hat{e} = \mathbb{E}[e|e\sim\theta]$ is the mean. Then
for any initial state $(\theta,h)\in\mathcal{S}$, if the policy $\mathbf{d}^\diamond$ is used, one can rewrite~\eqref{eqn:cost-func} as
\begin{align*}
\mathpzc C(\theta_k, h_k, a_k) = \alpha \bar{u} + (1-q(\bar{u},h_k)) \mathrm{Tr}(\mathtt{Var}(\theta_k))
\end{align*}
and for any $k\geq 1$,
\begin{align*}
\mathtt{Var}(\theta_{k+1})=&\left\{
        \begin{array}{ll}
            A\mathtt{Var}(\theta_k)A^\top + W, & \text{if $\gamma_k = 0$}, \\
            W, & \text{otherwise},
        \end{array}
    \right.
\end{align*}
with $\mathbb{P}(\gamma_k=0) = 1-q(\bar{u},h_k)$ and $\mathtt{Var}(\theta_1) = A\Sigma_0A^\top + W $. Then for any initial state $(\theta,h)\in\mathcal{S}$, with Assumption~\ref{asmpt:assumption-stability}, there exists a finite upper bound $\kappa(\theta)$, which depends on the initial state $\theta$, such that for any $k\geq 1$, $\mathbb{E}_{\theta,h}^{\mathbf{d^\diamond}}[\mathrm{Tr}(\mathtt{Var}(\theta_k))] \leq \kappa(\theta)$. Then one obtains that
\begin{align*}
\inf_{\mathbf{d},\theta,h}\mathpzc{J}(\mathbf{d},\theta,h) \leq&  \inf_{\theta,h}\mathpzc{J}(\mathbf{d}^{\diamond},\theta,h)  \\
<& \inf_{\theta} \kappa(\theta) + \alpha \bar{u} \\
<& \infty.
\end{align*}

We now focus on the verification of~\eqref{eqn:ExistConditionB}. Define the stopping time
\begin{align*}
\mathbb{T}_\beta \triangleq \inf\{k\geq 1: \upsilon_\beta(\theta_k,h_k) \leq \underline{\upsilon}_\beta(\mathscr{N}_{0,W})\},
\end{align*}
where $\underline{\upsilon}_\beta(\mathscr{N}_{0,W}) = \min_{h\in\mathbbm{h}} \upsilon_\beta(\mathscr{N}_{0,W},h)$. Then by Lemma~4.1 in~\cite{schal1993average}, one has for any $\beta<1$ and $(\theta,h) \in \mathcal{S}$,
\begin{align*}
\upsilon_\beta(\theta,h)-m_\beta \leq& \underline{\upsilon}_\beta(\mathscr{N}_{0,W}) - m_\beta \\
&\:+\, \inf_{\mathbf{d}\in\mathcal{D}}\mathbb{E}_{\theta,h}^{\mathbf{d}}
\left[\sum_{k=1}^{\mathbb{T}_\beta -1} \mathpzc C(\theta_k, h_k, a_k)\right]. \addtag \label{eqn:ExistConB_1}
\end{align*}

We now prove the finiteness of $\mathbb{E}_{\theta,h}^{\mathbf{d^\diamond}}[\mathbb{T}_\beta]$ with any initial state $(\theta,h)$. To this end, let $h^*= \argmin_{h\in\mathbbm{h}} \upsilon_\beta(\mathscr{N}_{0,W},h)$ and
\begin{align*}
\mathbb{T}_{\beta}^* \triangleq \inf\{k\geq 1: (\theta_k,h_k) = (\mathscr{N}_{0,W},h^*)\}.
\end{align*}
Note that the dependence of $\mathbb{T}_{\beta}^*$ on $\beta$ is due to $h^*$.
Then one can see that
for any realization of $\{\theta_k\}$ and $\{h_k\}$, \[\mathbb{T}_{\beta}^*\geq\mathbb{T}_{\beta}\] always holds. Note that $\{h_k\}$ evolves independently. Though $\{\theta_k\}$ depends on  the realization of $\{h_k\}$, under the policy $\mathbf{d}^{\diamond}$, $\mathbb{P}(\theta_k = \mathscr{N}_{0,W}) \geq q(\bar{u},\underline{h})$ for all $k>1$ with any initial state~$\theta$. Based on the above two observations, we construct a uniform (for any $0<\beta<1$) upper bound of $\mathbb{E}_{\theta,h}^{\mathbf{d^\diamond}}[\mathbb{T}_{\beta}^*]$ as follows.
Define
\[ \mathpzc K(h,h') = \min\{k>1: h_k=h', h_1 = h\}  \]
as the first time $h_k$ reaches $h'$ when starting at $h$.
Then given the initial state $h$, let $\{T_i\}_{i\geq 1}$ be a sequence of independent random variables such that $\mathbb{E}[T_1] = \mathbb{E}[\mathpzc K(h,h^*)]$ and $\mathbb{E}[T_i] = \mathbb{E}[\mathpzc K(h^*,h^*)], i>1$. Let $\chi$ be a geometrically distributed random variable with success probability $q(\bar{u},\underline{h})$. Then one obtains that
\begin{align*}
&\mathbb{E}_{\theta,h}^{\mathbf{d^\diamond}}[\mathbb{T}_{\beta}^*]
\\&\leq  \mathbb{E}\left[\sum_{i=1}^{\chi}T_i\right] \\
&\leq \frac{1}{q(\bar{u},\underline{h})} \max\{\mathbb{E}[\mathpzc K(h,h^*)],\mathbb{E}[\mathpzc K(h^*,h^*)]\} \\
&\leq \frac{1}{q(\bar{u},\underline{h})} \max_{h,h'\in\mathbbm{h}}\{\max\{\mathbb{E}[\mathpzc K(h,h')],\mathbb{E}[\mathpzc K(h',h')]\}\}
\addtag \label{eqn:UniformUpper}   \\
&< \infty, \addtag \label{eqn:UniformBounded}
\end{align*}
where the second inequality follows from the Wald's identity and the last inequality follows from the assumption that $\mathbbm{h}$ is a finite set and Assumption~\ref{asmpt:CommunicationModel}-(\rmnum{5}). Note that since~\eqref{eqn:UniformUpper} is independent of $\beta$, $\mathbb{E}_{\theta,h}^{\mathbf{d^\diamond}}[\mathbb{T}_{\beta}^*] $ is uniformly bounded.

With a little abuse of notation, define
\begin{align} \label{eqn:firstTimeStand}
 \mathpzc K(\theta,\theta') = \min\{k>1: \theta_k=\theta', \theta_1 = \theta\}.
\end{align}
Then for any initial state $(\theta,h)$, the finiteness of $\mathpzc{J}(\mathbf{d},\theta,h)$ implies $\mathbb{P}\left(\mathpzc K(\theta,\mathscr{N}_{0,W})<\infty\right) = 1$. Then by the definition of $\underline{\upsilon}_\beta(\mathscr{N}_{0,W})$, one obtains that
\begin{align} \label{eqn:ExistconBlimit0}
\limsup_{\beta \uparrow 1} (\underline{\upsilon}_\beta(\mathscr{N}_{0,W}) - m_\beta) = 0.
\end{align}
One thus obtains that for any $(\theta,h)\in\mathcal{S}$,
\begin{align*}
&\limsup_{\beta \uparrow 1}[\upsilon_\beta(\theta,h)-m_\beta] \\
& \leq   \limsup_{\beta \uparrow 1}  \inf_{\mathbf{d}\in\mathcal{D}}\mathbb{E}_{\theta,h}^{\mathbf{d}}\left[\sum_{k=1}^{\mathbb{T}_\beta -1} \mathpzc C(\theta_k, h_k, a_k)\right] \\
& \leq   \limsup_{\beta \uparrow 1}  \mathbb{E}_{\theta,h}^{\mathbf{d}^{\diamond}}\left[\sum_{k=1}^{\mathbb{T}_\beta -1} \mathpzc C(\theta_k, h_k, a_k)\right] \\
& \leq \limsup_{\beta \uparrow 1}   \mathbb{E}_{\theta,h}^{\mathbf{d}^{\diamond}}\left[\sum_{k=1}^{\mathbb{T}_{\beta}^* -1} \mathpzc C(\theta_k, h_k, a_k)\right] \\
& \leq \limsup_{\beta \uparrow 1} \mathbb{E}_{\theta,h}^{\mathbf{d^\diamond}}[\mathbb{T}_{\beta}^*-1] (\kappa(\theta) + \alpha \bar{u}) \\
& < \infty,
\end{align*}
where the first inequality follows from~\eqref{eqn:ExistConB_1}~and~\eqref{eqn:ExistconBlimit0}, the last second inequality follows from the Wald's identity and the last inequality follows from~\eqref{eqn:UniformBounded}.  The condition (relative discounted value function) thus is verified.

The proof of Theorem~\ref{theorem:ExistenceStationary} now is complete.


\section{Structural Description: Majorization Interpretation } \label{sec:structureResults}
In this section, we borrow the technical reasoning from~\cite{hajek2008paging, nayyar2013optimal} to show that the optimal transmission power allocation strategy has a symmetric and monotonic structure and the optimal estimator has a simple form for cases where the system is scalar.

Before presenting the main theorem, we introduce a notation as follows. For a policy $\mathbf{d}(d)\in\mathcal{D}_{\rm ds}$ with $d(\theta,h) = a(e)$, with a little abuse of notations, we write $a(e)$ as $a_{\theta,h}(e)$ to emphasize its dependence on the state $(\theta,h)$. We also use $a_{\theta,h}(e)$ to represent the deterministic and stationary policy $\mathbf{d}(d)$ with $d(\theta,h) = a(e)$.

According to Theorem~\ref{theorem:ExistenceStationary}, to
solve Problem~\ref{problem:2}, we can restrict
the optimal policy to be deterministic and stationary without
any performance loss. The following theorem suggests that
we the optimal policy can be further restricted to be
a specific class of functions.
{\begin{theorem} \label{theorem:StructuralResults}
Let the system~\eqref{eqn:process-dynamics} be scalar. There exists an optimal deterministic and stationary policy $a^*_{\theta,h}(e)$ such that
$a^*_{\theta,h}(e)$ is a symmetric and monotonic function of $e$, i.e., for any given $(\theta,h)\in\mathcal{S}$,
\begin{inparaenum}
\item[$(i)$.]$a^*_{\theta,h}(e) = a^*_{\theta,h}(-e)$
for all $e\in \mathbb R$;
\item[$(ii)$.]$a^*_{\theta,h}(e_1) \geq a^*_{\theta,h}(e_2)$ when $|e_1|\geq |e_2|$ with equality for $|e_1| = |e_2|$.
\end{inparaenum}
\end{theorem}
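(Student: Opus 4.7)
My plan is to leverage Theorem~\ref{theorem:ExistenceStationary}: since an optimal deterministic stationary policy exists as a minimizer of the average-cost Bellman right-hand side
\[\mathpzc C(\theta,h,a) + \mathbb{E}\bigl[\mathpzc Q^*(\theta',h')\mid \theta,h,a\bigr],\]
it suffices to show that, for every $(\theta,h)$, one can select a minimizer $a^*_{\theta,h}(e)$ that is symmetric in $e$ and monotone in $|e|$. I would carry this out by a symmetrization-plus-rearrangement argument on the Bellman right-hand side, in the spirit of Hajek~\cite{hajek2008paging} and Nayyar \emph{et al.}~\cite{nayyar2013optimal}, but with the extra bookkeeping needed for the infinite-horizon average-cost setting.

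\emph{Invariant belief class.} First I would identify the subclass $\Theta_{\rm su}\subset\Theta$ of densities that are symmetric and unimodal about $0$, and show it is forward-invariant under the belief update~\eqref{eqn:BliefStateUpdate} whenever the action used is symmetric and monotone in $|e|$. The initial belief $\theta_1$ lies in $\Theta_{\rm su}$; after a successful transmission the belief resets to $\mathscr{N}_{0,W}\in\Theta_{\rm su}$; and if $\theta\in\Theta_{\rm su}$ and $a$ is symmetric and monotone in $|e|$, then $1-q(a(\cdot),h)$ is symmetric and nonincreasing in $|e|$, so the post-transmission density $\theta^{+}_{\theta,h,a}$ stays symmetric-unimodal. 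The scalar propagation $\theta^{+}\mapsto |A|^{-1}\theta^{+}(A^{-1}\cdot)\ast\mathscr{N}_{0,W}$ preserves symmetric unimodality (dilation and convolution with an even log-concave kernel), giving $\phi(\theta,h,a,0)\in\Theta_{\rm su}$ as well.

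\emph{Symmetry and monotonicity of the selector.} For the symmetry claim I would use a reflection argument: with $a^{\leftrightarrow}(e)\triangleq a(-e)$, a change of variables gives $\mathpzc C(\theta,h,a)=\mathpzc C(\theta,h,a^{\leftrightarrow})$ whenever $\theta\in\Theta_{\rm su}$, and the two resulting post-transmission densities are mirror images of one another. Combined with reflection-invariance of $\mathpzc Q^*$ on $\Theta_{\rm su}$, this shows that the Bellman value does not distinguish $a$ from $a^{\leftrightarrow}$, so a Kuratowski--Ryll-Nardzewski-type measurable selection picks a symmetric minimizer. For monotonicity I would run a pointwise swap: if $|e_1|>|e_2|$ yet $a(e_1)<a(e_2)$, exchange the values $a(\pm e_1)$ and $a(\pm e_2)$ to form $\tilde a$. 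Because $(e-\hat e_+)^2$ is larger at $e_1$ than at $e_2$ and $q$ is nondecreasing in power, the stage cost $\mathpzc C(\theta,h,\cdot)$ does not increase under this swap. Iterating the swap over all level sets of $\theta$ yields the symmetric-decreasing rearrangement of $a$ in $|e|$, which is both symmetric and monotone.

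\emph{Main obstacle: Schur-concavity of $\mathpzc Q^*$.} The delicate part of the plan is justifying that the future-cost term $\mathbb{E}[\mathpzc Q^*(\theta',h')\mid\theta,h,a]$ also does not increase under the rearrangement; equivalently, that $\mathpzc Q^*$ restricted to $\Theta_{\rm su}$ is Schur-concave with respect to symmetric-decreasing rearrangement of the belief. I would establish this first for the $\beta$-discounted value function $\upsilon_\beta$ by monotone induction on the discounted Bellman operator---exploiting that the integrand $c(e,h,a)$ and the update map $\phi(\cdot,h,a,\gamma)$ are both rearrangement-respecting---and then transfer the property to $\mathpzc Q^*$ via the vanishing-discount limit $\beta\uparrow 1$ that underlies Theorem~\ref{theorem:ExistenceStationary}, using the uniform bound from Condition~C5 together with the lower semicontinuity of $\mathpzc C$. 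Because $\Theta$ is abstract, care is needed with the topology in which this limit is taken, and this is the step that I expect to be the main technical obstacle.
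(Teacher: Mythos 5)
Your route (structural selection from the average-cost Bellman equation, backed by rearrangement-monotonicity of the relative value function obtained through a vanishing-discount induction) is genuinely different from the paper's, but as proposed it has concrete gaps. First, the pointwise swap does not control the stage cost \eqref{eqn:cost-func}: writing the change in the distortion term caused by exchanging $a(\pm e_1)$ and $a(\pm e_2)$ (take $\hat{e}_+=0$ for simplicity), you get $\bigl[q(a(e_2),h)-q(a(e_1),h)\bigr]\bigl[\theta(e_2)e_2^{2}-\theta(e_1)e_1^{2}\bigr]$, which has no definite sign for a symmetric unimodal $\theta$, since $\theta(e_1)\le\theta(e_2)$ while $e_1^{2}\ge e_2^{2}$; moreover $\hat{e}_+$ itself moves with $a$, and the power term $\alpha\int a\,\theta$ is not preserved when $\theta(e_1)\neq\theta(e_2)$. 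Swapping only within level sets of $\theta$, as you suggest, merely permutes points of equal $|e|$ and therefore can never create monotonicity in $|e|$. Second, reflection invariance of the Bellman right-hand side only shows the minimizer set is closed under $a\mapsto a^{\leftrightarrow}$; it does not produce a symmetric minimizer (both $a_0$ and $a_0^{\leftrightarrow}$ can be minimizers with neither symmetric, and there is no convexity allowing you to average them). Third, Schur-concavity of $\mathpzc Q^*$ \emph{restricted to} the symmetric-unimodal class is not the property you need: an asymmetric candidate action applied at a symmetric belief drives the post-failure belief out of that class, so to dominate it you must compare continuation values \emph{across} classes, i.e., show $\mathpzc Q^*(\theta,h)\ge\mathpzc Q^*(\theta_\ast,h)$ whenever $\theta$ and a symmetric unimodal $\theta_\ast$ are related as in the paper's relation $\mathcal R$ (majorization, symmetry/unimodality of $\theta_\ast$, agreement outside $\mathcal E$), together with propagation of that relation by the update \eqref{eqn:BliefStateUpdate}; and the transfer through the vanishing-discount limit is delicate because the relative value function is obtained via a liminf along a subsequence. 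Also note the initial belief need not be symmetric, since $\mu_{x_0}$ is arbitrary, so forward-invariance from $\theta_1$ cannot be the starting point.

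The paper avoids the value function entirely, which is why none of these issues arise there. Its proof of Theorem~\ref{theorem:StructuralResults} is a direct policy-comparison: for any deterministic stationary policy with finite cost one may assume the initial belief is $\mathscr{N}_{0,W}$ (it is reached almost surely and transients do not affect the average cost); under a stationary policy the belief then evolves along a failure chain $\theta^0\to\theta^1\to\cdots$ which resets to $\theta^0=\mathscr{N}_{0,W}$ at every success. Replacing each $a^i=a_{\theta^i,h}$ by the belief-matched symmetric increasing rearrangement $(a^i)^{\sigma}_{\theta^i,\hat{\theta}^i}$ preserves, by construction, both the expected power and the success probability $\varphi$, so the chain statistics are unchanged ($\hat p_i=p_i$); Lemma~\ref{lemma:Rrelationupdate} (this is where scalarity of $A$ enters) keeps $\theta^i\mathcal R\hat\theta^i$ along the chain, and Lemma~\ref{lemma:RrelationCost} shows every stage cost weakly decreases, hence so does the average cost---no structural property of $\mathpzc Q^*$, no value iteration, and no $\beta\uparrow1$ argument is needed. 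If you wish to keep your dynamic-programming route, the missing ingredients are precisely cross-class analogues of Lemmas~\ref{lemma:Rrelationupdate} and~\ref{lemma:RrelationCost} (a rearrangement that rebalances both the action and the belief, not a pointwise swap at fixed $\theta$), an induction showing the discounted value functions are monotone with respect to $\mathcal R$, and an argument that this monotonicity survives the vanishing-discount construction.
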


 The proof is given in Section~\ref{section:proofStructure}.} Note that Theorem~\ref{theorem:StructuralResults} does not require a symmetric initial distribution $\mu_{x_0}$. Intuitively, this is because 1) whatever the initial distribution is, the belief state will reach the very special state $\mathscr{N}_{0,W}$ sooner or later, 2) we focus on the long term average cost and the cost incurred by finite transient states can be omitted.

\begin{remark} \label{remark-FeedbackStructure}
When there exists only a finite number of power levels, only the thresholds of the innovation error used to switch the power levels are to be determined for computation of the optimal transmission power control strategy. This significantly simplifies both the offline computation complexity and the online implementation. The feedback of an accurate action $l_k$ in Figure~\ref{implementation-model} is possible in this scenario, since one  just need to feed back a finite number of thresholds, at which the power level switches.
\end{remark}
{
In the following theorem, without a proof, we give the optimal estimator~\eqref{eqn:mmse-x-k} when the transmission power controller adopts the optimal policy defined in Theorem~\ref{theorem:StructuralResults}. The simple structure of the optimal remote estimator $g_k^{\ast}$ in~\eqref{eqn:OptEstimatorStructure} is due to the symmetric structure the function $a^*_{\theta,h}(e)$ possesses.
Recall that $\tau(k)$ is defined in~\eqref{eqn:tauk}.
\begin{theorem}
Consider the optimal transmission power controller $f_k^*$,
\begin{equation*}
u_k =f_k^*(x_k,\mathcal O_k^-)\triangleq a^*_{\theta_k,h_k}(e_k)
\end{equation*}
where $a^*_{\theta,h}(e)$ is a symmetric and monotonic function of
$e_k$.
Then
the optimal
remote state estimator $g_k^*$ is given by
\begin{align} \label{eqn:OptEstimatorStructure}
\hat{x}_k=g_k^\ast(\mathcal{O}^+_k)= \left\{\begin{array}{ll}{x}_k,
 & \mathrm{if}~\gamma_k = 1, \\
  A^{\tau(k)}\hat x_{k-\tau(k)}, & \mathrm{if}~\gamma_k = 0,\end{array}\right.
\end{align}
\end{theorem}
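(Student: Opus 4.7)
The plan is to invoke Lemma~\ref{lemma:opt-estimator} to reduce the optimal estimator to the conditional mean $\mathbb{E}[x_k\mid \mathcal{O}_k^+]$, and then to show that the symmetry of $a^*_{\theta,h}$ established in Theorem~\ref{theorem:StructuralResults} propagates through the belief recursion so that the conditional mean of the innovation $e_k$ vanishes whenever $\gamma_k=0$.

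The case $\gamma_k=1$ is trivial: since $x_k\in\mathcal{O}_k^+$, we have $\hat{x}_k = \mathbb{E}[x_k\mid\mathcal{O}_k^+] = x_k$. For $\gamma_k=0$, I would use the decomposition $x_k = A^{k-\tau(k)}x_{\tau(k)} + e_k$ from~\eqref{eqn:Innovation} together with the fact that $A^{k-\tau(k)}x_{\tau(k)}$ is $\mathcal{O}_k^+$-measurable, reducing the task to showing $\mathbb{E}[e_k\mid\mathcal{O}_k^+]=0$. By the construction of the belief-state MDP, the conditional density of $e_k$ given $\mathcal{O}_k^+$ under $\gamma_k=0$ is precisely the post-transmission belief $\theta^{+}_{\theta_k,h_k,a^*_{\theta_k,h_k}}(e)$, so it suffices to prove that this post-transmission belief is an even function of $e$.

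The crux is a short induction showing that the pre-transmission belief $\theta_j$ is even for every $j$ with $\tau(k) < j \leq k$. The base case follows from~\eqref{eqn:BliefStateUpdate}: since $\gamma_{\tau(k)}=1$, one gets $\theta_{\tau(k)+1} = \mathscr{N}_{0,W}$, which is even, and the initialization $\gamma_0=1$ guarantees that the base time $\tau(k)$ is well-defined for every $k\geq 1$. For the inductive step, assume $\theta_j$ is even; since $\gamma_j=0$ for $\tau(k) < j < k$ by the definition of $\tau(k)$, and Theorem~\ref{theorem:StructuralResults} gives $a^*_{\theta_j,h_j}(-e)=a^*_{\theta_j,h_j}(e)$, the map $e\mapsto q(a^*_{\theta_j,h_j}(e),h_j)$ is even and hence
\[\theta^{+}_{\theta_j,h_j,a^*}(e) = \frac{(1-q(a^*_{\theta_j,h_j}(e),h_j))\,\theta_j(e)}{1-\varphi(\theta_j,h_j,a^*_{\theta_j,h_j})}\]
is even as a normalized product of even functions. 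The failed-transmission branch of~\eqref{eqn:BliefStateUpdate} then expresses $\theta_{j+1}$ as $|A|^{-1}\theta^{+}(A^{-1}\cdot)$ convolved with the even Gaussian $\mathscr{N}_{0,W}$; since $A$ is a real scalar, $\theta^{+}(A^{-1}(-e))=\theta^{+}(A^{-1}e)$, and the convolution of even functions is even, so $\theta_{j+1}$ is even. Applying this at $j=k$ and the same symmetry argument once more produces an even $\theta^{+}_{\theta_k,h_k,a^*_{\theta_k,h_k}}$, whose first moment is zero.

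The main obstacle is the careful bookkeeping of symmetry propagation rather than any deep analytical issue. The scalar hypothesis enters essentially twice: Theorem~\ref{theorem:StructuralResults} only guarantees a symmetric optimal action in the scalar case, and $A$ being a real scalar ensures that $A^{-1}$ preserves evenness under pull-back. In particular no symmetry of the initial distribution $\mu_{x_0}$ is needed, because any successful transmission resets the belief to the symmetric Gaussian $\mathscr{N}_{0,W}$ and all subsequent failed transmissions preserve the symmetry by the induction above.
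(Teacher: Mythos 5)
Your proposal is correct and follows exactly the route the paper indicates but leaves unproven (the paper states this theorem ``without a proof,'' attributing it to the symmetry of $a^*_{\theta,h}$): reduce to the MMSE estimator of Lemma~\ref{lemma:opt-estimator}, then show by induction along the belief recursion~\eqref{eqn:BliefStateUpdate} that after the reset to $\mathscr{N}_{0,W}$ at the last successful transmission every pre- and post-transmission belief on the innovation is even, so $\mathbb{E}[e_k\mid\mathcal{O}_k^+]=0$ when $\gamma_k=0$. Your bookkeeping of where the scalar assumption and the symmetry of the action enter matches the paper's intent, and your observation that no symmetry of $\mu_{x_0}$ is needed mirrors the paper's own remark; note only that the paper's displayed formula $A^{\tau(k)}\hat x_{k-\tau(k)}$ is evidently a typo for the quantity $A^{k-\tau(k)}\hat x_{\tau(k)}=A^{k-\tau(k)}x_{\tau(k)}$ that your argument (correctly) establishes.
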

In the following, we focus on the proof of Theorem~\ref{theorem:StructuralResults}.}
\subsection{Technical Preliminaries}
We first give some supporting definitions and lemmas as follows.
\begin{definition}[Symmetry]
A function $f:\mathbb{R}^n\to \mathbb{R}$ is said to be
symmetric about a point $o\in\mathbb{R}^n$, if, for
any two points $x,y\in\mathbb{R}^n$, $\|y-o\|=\|x-o\|$ implies
$f(x)=f(y)$.

\end{definition}

\begin{definition}[Unimodality]
A function $f:\mathbb{R}^n\to \mathbb{R}$ is said to be
unimodal, if there exists $o\in\mathbb{R}^n$ such that
$f(o)\geq f(o+\alpha_{0}v)\geq
f(o+\alpha_{1}v)$ holds for any
$v\in\mathbb{R}^n$ and any $\alpha_1\geq \alpha_0\geq0$.
\end{definition}

\begin{definition}
For any given Borel measurable set $\mathcal{B}\subset\mathbb{R}^n$, where $\mathfrak{L}(\mathcal{B})<\infty$, we denote
the symmetric rearrangement of $\mathcal{B}$ by $\mathcal{B}^\sigma$, i.e., $\mathcal{B}^\sigma$ is a ball centered at $0$ with the Lebesgue measure $\mathfrak{L}(\mathcal{B})$.
For a given integrable, nonnegative function $f:\mathbb{R}^n\to
\mathbb{R}$, we denote the symmetric nonincreasing rearrangement
 of $f$ by $f^{\sigma}$,
where $f^\sigma$ is defined as
$$f^\sigma(x)\triangleq \int_{0}^\infty \mathds{1}_{\{o\in\mathbb{R}^n:f(o)>t\}^{\sigma}}(x){\rm d}t.$$
\end{definition}

\begin{definition}
{
For any given two integrable, nonnegative functions
$f,g:\mathbb{R}^n\to \mathbb{R}$, we say that
$f$ majorizes $g$, which is denoted as $g\prec f$, if the following
conditions hold:
\begin{align}
\int_{\|x\|\leq t}f^{\sigma}(x){\rm d}x
\geq \int_{\|x\|\leq t}g^{\sigma}(x){\rm d}x~~~\forall t\geq 0  \label{eqn:MajorizationCond1}
\end{align}
and
$$\int_{\mathbb{R}^n}f(x){\rm d}x
= \int_{\mathbb{R}^n}g(x){\rm d}x.$$
Equivalently,~\eqref{eqn:MajorizationCond1} can be  altered by the following condition: for any Borel set $\mathcal{B} \subset \mathbb{R}^n$, there always exists another Borel set $\mathcal{B}'$ with $\mathfrak{L}(\mathcal{B}') =  \mathfrak{L}(\mathcal{B})$ such that $\int_{\mathcal{B}}g(x){\rm d}x
\leq \int_{\mathcal{B}'}f(x){\rm d}x.$
}
\end{definition}

Recall that $L$, which is introduced in~\eqref{eqn:StrucActionSatur}, is the saturation threshold for actions.
\begin{definition}[Binary Relation $\mathcal{R}$ on Belief States]
For any two belief states $\theta, \theta_\ast\in\Theta$, we say that
$\theta\mathcal{R}\theta_\ast$ if the following conditions hold:
\begin{enumerate}
\item[(\rmnum{1}).] there holds $\theta\prec \theta_*$;

\item[(\rmnum{2}).] $\theta_\ast$ is symmetric and unimodal about the origin point $0$.

\item[(\rmnum{3}).] $\theta(e) = \theta_\ast(e)$ for any $e\in \mathbb{R}^n\backslash\mathcal{E}$, where $\mathcal{E} \triangleq \{e\in\mathbb{R}^n: \|e\| \leq L\}$ is defined below~\eqref{eqn:StrucActionSatur}.
\end{enumerate}
\end{definition}

In the following, we define a symmetric increasing rearrangement of an action $a\in\mathcal{A}$, which preserves the average power consumption and successful transmission probability.
\begin{definition}
For any given Borel measurable $\mathcal{B}\subset\mathbb{R}^n$, where $\mathfrak{L}(\mathcal{B})<\infty$, we define
\[\mathcal{B}^{\sigma}_{\theta,\hat{\theta}} \triangleq \{e\in\mathbb{R}^n: \|e\|\geq r \}    ,\]
where $\theta,\hat{\theta}\in\Theta$ and $r$ is determined such that $\int_{\mathcal{B}}\theta(e)\mathrm{d}e = \int_{\mathcal{B}^{\sigma}_{\theta,\hat{\theta}}}\hat{\theta}(e)\mathrm{d}e$. Given an action $a\in\mathcal{A}$, define
$$a^\sigma_{\theta,\hat{\theta}}(e)\triangleq \int_{0}^\infty \mathds{1}_{\{o\in\mathbb{R}^n:a(o)>t\}^{\sigma}_{\theta,\hat{\theta}}}(e){\rm d}t.$$
\end{definition}
It can be verified that if $\int_{\mathbb{R}^n\backslash\mathcal{E}} \theta(e)\mathrm{d}e = \int_{\mathbb{R}^n\backslash\mathcal{E}} \hat{\theta}(e)\mathrm{d}e $,
$a^\sigma_{\theta,\hat{\theta}}(e)\in\mathcal{A}$. One also obtains that
\begin{align} \label{eqn:PowerPreserve}
\int_{\mathbb{R}^n} a(e) \theta(e) \mathrm{d}e =& \int_{\mathbb{R}^n} a^\sigma_{\theta,\hat{\theta}}(e) \hat{\theta}(e) \mathrm{d}e
\end{align}
and for any $h$,
\begin{align*}
\int_{\mathbb{R}^n} q(a(e),h) \theta(e) \mathrm{d}e =& \int_{\mathbb{R}^n} q\left(a^\sigma_{\theta,\hat{\theta}}(e),h\right) \hat{\theta}(e) \mathrm{d}e.
\end{align*}
Then the following lemma follows straightforwardly.
\begin{lemma} \label{lemma:Rrelationupdate}
If $A$ is a scalar, then $\theta\mathcal{R}\theta_\ast$ implies  $\phi(\theta, h,a, 0)\mathcal{R}\phi(\theta_\ast, h, a^\sigma_{\theta,\theta_\ast}, 0)$, where $\phi(\cdot,\cdot,\cdot,\cdot)$ is the belief update equation defined in~\eqref{eqn:BliefStateUpdate}.
\end{lemma}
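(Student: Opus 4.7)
The plan is to verify each of the three defining conditions of the relation $\mathcal R$ for the pair $(\phi(\theta,h,a,0),\phi(\theta_\ast,h,a^\sigma_{\theta,\theta_\ast},0))$, using the layer-cake construction of $a^\sigma_{\theta,\theta_\ast}$, the explicit form of the post-transmission belief $\theta^+_{\theta,h,a}$, and classical rearrangement inequalities applied to the scalar dynamics $e\mapsto A^{-1}e$ followed by convolution with $\mathscr N_{0,W}$.

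First I would verify condition (ii), the symmetric unimodality of $\phi(\theta_\ast,h,a^\sigma,0)$. By the defining formula, the super-level sets of $a^\sigma_{\theta,\theta_\ast}$ are of the form $\{\|e\|\ge r_t\}$, so $a^\sigma$ is symmetric about $0$ and non-decreasing in $|e|$. Since $q(\cdot,h)$ is non-decreasing, $1-q(a^\sigma(e),h)$ is symmetric and non-increasing in $|e|$; multiplying by the symmetric unimodal $\theta_\ast$ gives a symmetric unimodal post-transmission density $\theta^+_{\theta_\ast,h,a^\sigma}$. In the scalar case the affine rescaling $e\mapsto A^{-1}e$ preserves symmetry about $0$ and unimodality, and convolution with the symmetric unimodal Gaussian $\mathscr N_{0,W}$ preserves both properties. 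Next, for condition (i), the construction of $a^\sigma_{\theta,\theta_\ast}$ is designed so that $\varphi(\theta,h,a)=\varphi(\theta_\ast,h,a^\sigma)$: the two post-transmission beliefs therefore share a common normalizing denominator, and it suffices to show $(1-q(a(e),h))\theta(e)\prec(1-q(a^\sigma(e),h))\theta_\ast(e)$. I would expand $1-q$ via its layer-cake representation, reducing the claim to comparing, at each threshold, the $\theta$-measure of $\{a\le t\}$ against the $\theta_\ast$-measure of $\{a^\sigma\le t\}$; by the definition of the rearrangement together with $\theta\prec\theta_\ast$, one obtains a Hardy-Littlewood-type inequality on every sub-level set. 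The scaling by $A^{-1}$ then preserves majorization by a change of variables, and convolution with the symmetric unimodal kernel $\mathscr N_{0,W}$ preserves it via a Riesz-type rearrangement inequality, crucially using the symmetric unimodality of the right-hand side established above.

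For condition (iii), the saturation structure~\eqref{eqn:StrucActionSatur} forces $a(e)=\bar u$ on $\mathbb R\setminus\mathcal E$; a direct check of the layer-cake construction together with the tail equality $\theta=\theta_\ast$ from condition (iii) of $\theta\mathcal R\theta_\ast$ gives $a^\sigma(e)=\bar u$ there as well. Combined with equal denominators, this shows $\theta^+_{\theta,h,a}=\theta^+_{\theta_\ast,h,a^\sigma}$ on $\mathbb R\setminus\mathcal E$. The main obstacle I anticipate is transferring this tail agreement through the subsequent convolution with $\mathscr N_{0,W}$, since a smooth Gaussian kernel can in principle destroy strict pointwise agreement on the tails; I expect the resolution to exploit that the difference of the two post-transmission densities is supported inside $\mathcal E$ and that the noise-spread on both sides is governed by the same Gaussian, so that the two resulting convolved densities differ only by the same linear functional of a common compactly supported difference, preserving the prescribed tail structure required by condition (iii) of $\mathcal R$.
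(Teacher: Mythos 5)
Your plan for conditions (i) and (ii) is the right fleshing-out of what the paper leaves implicit: the paper gives no written proof of this lemma (it records only that the rearrangement $a^\sigma_{\theta,\theta_\ast}$ stays in $\mathcal{A}$ and preserves the average power and the success probability, so that $\varphi(\theta,h,a)=\varphi(\theta_\ast,h,a^\sigma_{\theta,\theta_\ast})$, and then asserts the lemma ``follows straightforwardly''), and your layer-cake/Hardy--Littlewood argument for $(1-q(a(\cdot),h))\theta \prec (1-q(a^\sigma_{\theta,\theta_\ast}(\cdot),h))\theta_\ast$, together with preservation of symmetry/unimodality and of majorization under the scalar scaling and the Gaussian convolution (Riesz-type inequality, using that the dominating density is symmetric unimodal), is exactly the Hajek/Nayyar-style machinery this step rests on.

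The genuine gap is condition (iii), and you put your finger on it yourself but the proposed resolution does not work. It is true that $\theta^+_{\theta,h,a}$ and $\theta^+_{\theta_\ast,h,a^\sigma_{\theta,\theta_\ast}}$ agree on $\mathbb{R}\setminus\mathcal{E}$ (both actions equal $\bar u$ there, the densities agree there, and the normalizers are equal). However, the two updated beliefs differ by $\bigl(\tfrac{1}{|A|}\delta(A^{-1}\cdot)\bigr)*\mathscr{N}_{0,W}$, where $\delta=\theta^+_{\theta,h,a}-\theta^+_{\theta_\ast,h,a^\sigma_{\theta,\theta_\ast}}$ is compactly supported; this convolution is a real-analytic function of $e$, so if it vanished on all of $\{|e|>L\}$ it would vanish identically, and then (the Gaussian characteristic function being nowhere zero) $\delta=0$ a.e. Hence pointwise tail equality, as condition (iii) of $\mathcal{R}$ demands, is destroyed by the update whenever the two post-transmission beliefs differ at all on $\mathcal{E}$; the vague claim that the two convolved densities ``differ only by the same linear functional of a common compactly supported difference'' establishes nothing of the sort. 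What the subsequent construction in the proof of Theorem~2 actually needs from (iii) is only that the rearrangement at the next stage again lies in $\mathcal{A}$, i.e. equality of the tail masses of the two updated beliefs outside $\mathcal{E}$; but even that weaker statement is not automatic, since it amounts to $\int \delta(y)\,\mathbb{P}(|Ay+w|>L)\,\mathrm{d}y=0$ with $w\sim\mathscr{N}_{0,W}$, which does not follow merely from $\int\delta=0$. So your proof is incomplete at exactly this point, and closing it requires either an additional argument for the tail condition or a weakening of condition (iii) in the definition of $\mathcal{R}$ --- a difficulty the paper's one-line justification leaves unaddressed as well.
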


Note that if $\theta\mathcal{R}\hat{\theta}$, then $q(a(e),h) \theta(e) \mathcal{R} q\left(a^\sigma_{\theta,\hat{\theta}}(e),h\right) \hat{\theta}(e)$. Then based on~\eqref{eqn:PowerPreserve}, following the same reasoning as in Lemma 15 in~\cite{nayyar2013optimal}, one obtains the following lemma.
\begin{lemma} \label{lemma:RrelationCost}
If $\theta\mathcal{R}\hat{\theta}$, then the following inequality about the one stage cost holds: $\mathpzc C(\theta, h, a) \geq \mathpzc C(\hat{\theta}, h, a^\sigma_{\theta,\hat{\theta}})$.
\end{lemma}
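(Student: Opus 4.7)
The plan is to split the one-stage cost into a transmission-power piece and a distortion piece, showing that the rearrangement preserves the former and can only decrease the latter. With $\varphi = \varphi(\theta,h,a)$ and $\hat e_+ = \mathbb{E}_{\theta^{+}_{\theta,h,a}}[e]$, one has
\[\mathpzc C(\theta,h,a) = \alpha\!\int_{\mathbb R^n}\! a(e)\theta(e)\,\mathrm d e + (1-\varphi)\,\mathrm{Tr}\,\mathtt{Var}(\theta^{+}_{\theta,h,a}),\]
and an analogous expression for $\mathpzc C(\hat\theta,h,a^\sigma_{\theta,\hat\theta})$. Equation~\eqref{eqn:PowerPreserve} matches the power terms, while the companion identity displayed immediately below it (with $q(a,h)$ in place of $a$) gives $\varphi(\theta,h,a) = \varphi(\hat\theta,h,a^\sigma_{\theta,\hat\theta})$, so that the $(1-\varphi)$ prefactors in the two distortion terms coincide. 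What remains is therefore to prove $\mathrm{Tr}\,\mathtt{Var}(\theta^{+}_{\theta,h,a})\ge \mathrm{Tr}\,\mathtt{Var}(\hat\theta^{+}_{\hat\theta,h,a^\sigma_{\theta,\hat\theta}})$.

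I would next exploit symmetry. Because both $\hat\theta$ and $a^\sigma_{\theta,\hat\theta}$ are symmetric about the origin, the density $g(e)\triangleq(1-q(a^\sigma_{\theta,\hat\theta}(e),h))\hat\theta(e)$ is even, so $\hat\theta^{+}$ has mean zero and $(1-\varphi)\mathrm{Tr}\,\mathtt{Var}(\hat\theta^{+}) = \int \|e\|^2 g(e)\,\mathrm d e$. Setting $f(e)\triangleq (1-q(a(e),h))\theta(e)$, the MMSE characterisation of $\hat e_+$ similarly gives $(1-\varphi)\mathrm{Tr}\,\mathtt{Var}(\theta^{+}) = \int \|e-\hat e_+\|^2 f(e)\,\mathrm d e$. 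Translating the integration variable by $\hat e_+$ rewrites this as $\int \|e\|^2 \tilde f(e)\,\mathrm d e$, where $\tilde f(e)\triangleq f(e+\hat e_+)$ is a nonnegative density of the same total mass as $f$, with vanishing first moment and (by translation invariance of the rearrangement) $\tilde f^{\sigma}=f^{\sigma}$.

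The main obstacle is the rearrangement inequality $\int \|e\|^2 \tilde f(e)\,\mathrm d e \ge \int \|e\|^2 g(e)\,\mathrm d e$, which I would establish along the lines of Lemma~15 in~\cite{nayyar2013optimal}. The hypothesis $\theta\mathcal R\hat\theta$ combined with the construction of $a^\sigma_{\theta,\hat\theta}$ propagates to $f\mathcal R g$ (i.e., $f\prec g$ with $g$ symmetric unimodal about $0$ and $f=g$ outside $\mathcal E$), as highlighted in the paragraph immediately preceding the lemma. The classical Hardy--Littlewood inequality, applied with the symmetric-increasing weight $\|e\|^2$, then yields $\int \|e\|^2 \tilde f(e)\,\mathrm d e \ge \int \|e\|^2 \tilde f^{\sigma}(e)\,\mathrm d e = \int \|e\|^2 f^{\sigma}(e)\,\mathrm d e$, while the level-set definition of $\prec$ together with $g^{\sigma}=g$ gives $\int_{\|e\|\le t} f^{\sigma}(e)\,\mathrm d e \le \int_{\|e\|\le t} g(e)\,\mathrm d e$ for every $t\ge 0$; a layer-cake integration converts this distribution-function inequality into $\int \|e\|^2 f^{\sigma}(e)\,\mathrm d e \ge \int \|e\|^2 g(e)\,\mathrm d e$. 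Chaining the two estimates delivers the required bound on the traces, and combining it with the equality of the power terms produces the claimed $\mathpzc C(\theta,h,a) \ge \mathpzc C(\hat\theta,h,a^\sigma_{\theta,\hat\theta})$.
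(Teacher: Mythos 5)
Your decomposition and overall route are essentially the argument the paper has in mind: the paper's own ``proof'' consists of the identities around~\eqref{eqn:PowerPreserve} (equality of the power term and of $\varphi$) plus a deferral to the reasoning of Lemma~15 in~\cite{nayyar2013optimal}, and your write-up reconstructs exactly that: reduce to comparing $(1-\varphi)\,\mathrm{Tr}\,\mathtt{Var}(\theta^{+}_{\theta,h,a})$ with $(1-\varphi)\,\mathrm{Tr}\,\mathtt{Var}(\theta^{+}_{\hat\theta,h,a^\sigma_{\theta,\hat\theta}})$, use that the latter density is symmetric (so its mean is $0$), translate the former to its mean, and finish with rearrangement/layer-cake inequalities. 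Those final steps are sound.

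The one place where you assert rather than prove is the keystone $f\prec g$ (with $g$ symmetric unimodal), where $f(e)=(1-q(a(e),h))\theta(e)$ and $g(e)=(1-q(a^\sigma_{\theta,\hat\theta}(e),h))\hat\theta(e)$, and the citation you give for it is off: the paragraph preceding the lemma states the relation for the \emph{success}-weighted densities $q(a(e),h)\theta(e)$ and $q(a^\sigma_{\theta,\hat\theta}(e),h)\hat\theta(e)$, not for the failure-weighted ones you need. The claim is true, but it requires its own short argument (this is precisely the content of the cited Lemma-15 reasoning): since $q(\cdot,h)$ is nondecreasing, each level set $\{e:\,1-q(a(e),h)>s\}$ is a sub-level set $\{a<c_s\}$ of $a$, whose $\theta$-mass equals the $\hat\theta$-mass of the centered ball $\{a^\sigma_{\theta,\hat\theta}<c_s\}$ by construction of the rearrangement; combining this with $\theta\prec\hat\theta$ (which, $\hat\theta$ being symmetric nonincreasing, gives $\int_{\mathcal B}\theta\le\int_{\mathcal B'}\hat\theta$ for the centered ball $\mathcal B'$ of the same Lebesgue measure) and writing $1-q(a(e),h)=\int_0^1\mathds{1}_{\{1-q(a(e),h)>s\}}\,\mathrm{d}s$ yields, for every Borel $\mathcal B$, $\int_{\mathcal B}f\le\int_{\mathcal B'}g$, i.e.\ $f\prec g$; symmetry and unimodality of $g$ follow because both factors are symmetric and nonincreasing in $\|e\|$. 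With that step filled in (or with an explicit appeal to~\cite{nayyar2013optimal} as the paper does), your proof is complete and matches the paper's approach.
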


\subsection{Proof of Theorem~\ref{theorem:StructuralResults}}\label{section:proofStructure}
We then proceed to prove Theorem~\ref{theorem:StructuralResults} in a constructive way. To be specific, we show that for any initial state $(\theta,h)$, and any deterministic and stationary policy $\mathbf{d}(d)\in\mathcal{D}_{\rm ds}$\footnote{By Theorem~\ref{theorem:ExistenceStationary}, without any performance loss, we can just focus on the class of deterministic and stationary policies $\mathcal{D}_{\rm ds}$. } such that $\mathpzc{J}(\mathbf{d}(d),\theta,h) < \infty$\footnote{ Note that it is impossible for a policy with infinite cost to be optimal. }, there exists another policy $\hat{\mathbf{d}}(\hat{d})\in\mathcal{D}_{\rm ds}$ with a symmetric and monotonic structure defined in Theorem~\ref{theorem:StructuralResults} such that $\mathpzc{J}(\hat{\mathbf{d}}(\hat{d}),\theta,h)\leq\mathpzc{J}(\mathbf{d}(d),\theta,h)$. For any initial state $(\theta,h)$ and any policy with finite cost, $\mathbb{P}\left(\mathpzc K(\theta,\mathscr{N}_{0,W})<\infty\right) = 1$, where $\mathpzc K(\cdot,\cdot)$ is defined in~\eqref{eqn:firstTimeStand}. Then, without loss of generality, we can assume that the initial state $\theta = \mathscr{N}_{0,W}$. Let $d(\theta,h)=a_{\theta,h}(e)$, then under the policy $\mathbf{d}(d)$, the evolution of belief states is illustrated in Figure~\ref{fig:evolutionBlief}. Note that since the evolution of channel gains is independent of action $a$, we omit it. In Figure~\ref{fig:evolutionBlief}, the channel gain is assumed to be a constant $h$.  Notice the difference between the notations $\theta^i$ and $\theta_k$. The quantity $\theta^i$ denotes an element in $\Theta$, while $\theta_k$ is the belief state of the MDP at time instant $k$.
\begin{center}
\setlength{\unitlength}{1.8mm}
\begin{figure}[http]
\thicklines
\centering
\begin{picture}
(50,14)(-2,-9.5)
\thicklines

\put(5,0){\circle{6}}
{\large
\put(4.2,-1){$\theta^0$}}
\put(16,0){\circle{6}}
{\large
\put(15.2,-1){$\theta^1$}}
{\huge
\put(22,0){$\ldots$}
\put(48,0){$\ldots$}
}
\qbezier(2.5,2)(-3,0)(2.5,-2)
\put(2.8,-2.2){\vector(2,-1){0.1}}

\qbezier(5.6,2.8)(10.5,6)(14.8,3.3)
\put(15.2,2.6){\vector(1,-1){0.1}}

\qbezier(15.2,-2.6)(12,-5)(7.9,-2.3)
\put(7.4,-1.6){\vector(-1,1){0.1}}

\put(32,0){\circle{6}}
{\large
\put(31.2,-1){$\theta^i$}}

\qbezier(30.5,-2.6)(18,-9)(7.1,-3)
\put(6.6,-2.4){\vector(-1,1){0.1}}

\put(43,0){\circle{6}}
{\large
\put(41.2,-1.5){$\theta^{i+1}$}}

\qbezier(40.8,-2.6)(22,-12)(5.8,-3.5)
\put(5.4,-3.0){\vector(-1,1){0.1}}

\qbezier(32.6,2.8)(37.5,6)(41.7,3.3)
\put(42.2,2.6){\vector(1,-1){0.1}}

\put(-1.5,2){$p_0$}
\put(7,5.5){$1-p_0$}
\put(34,5.5){$1-p_i$}
\put(10,-2){$p_1$}
\put(21,-4.5){$p_i$}
\put(26,-6){$p_{i+1}$}

\end{picture}
\caption{Evolution of belief states. The special state $\theta^0 = \mathscr{N}_{0,W}$, $p_i = \varphi(\theta^i, h,a_{\theta^i,h}), \forall i\geq 0$
is the successful transmission probability, and $\theta^{i+1} = \phi(\theta^i, h,a_{\theta^i,h}, 0), \forall i\geq0$. When the belief state is $\theta^i$, it incurs cost $\mathpzc C(\theta^i,h,a_{\theta^i,h})$.}
\label{fig:evolutionBlief}
\end{figure}
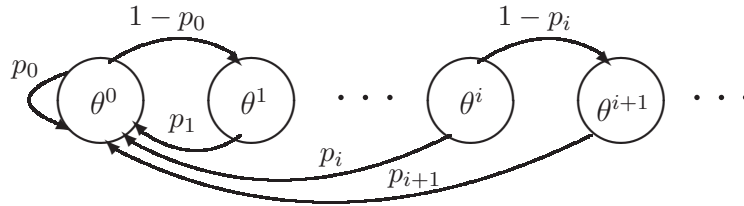
\end{center}
\vspace*{-7mm}
Let $\hat{d}(\theta,h)= \hat{a}_{\theta,h}(e)$, and $\hat{p}_i$ and $\hat{\theta}^i$ be the counterparts of $p_i$ and $\theta^i$, respectively. To facilitate presentation, let $a^i\triangleq a_{\theta^i,h}$ and $\hat{a}^i = \hat{a}_{\hat{\theta}^i,h}$. Then $\{\hat{a}^i\}_{i\in\mathbb N}$ are constructed as follows:
\[\hat{a}^i = \left(a^i\right)^\sigma_{\theta^i,\hat{\theta}^i}.\]
Then by Lemmas~\ref{lemma:Rrelationupdate}~and~\ref{lemma:RrelationCost}, one obtains that
\begin{align*}
\hat{p}_i = & p_i, \forall i\geq0, \\
\hat{\theta}^0 = \theta^0 = \mathscr{N}_{0,W},& \quad  \theta^i\mathcal{R}\hat{\theta}^i, \forall i\geq 1, \\
\mathpzc C(\theta^i,h,a^i) \geq& \mathpzc C(\hat{\theta}^i,h,\hat{a}^i),\forall i\geq0, h\in\mathbbm{h}.
\end{align*}
It then follows that $\mathpzc{J}(\hat{\mathbf{d}}(\hat{d}),\theta,h)\leq\mathpzc{J}(\mathbf{d}(d),\theta,h)$. Since $\{\hat{a}^i\}_{i\in \mathbb N}$ is symmetric and increasing, and $\hat{\theta}^i$ is symmetric, one concludes the results of the theorem.

\section{Conclusion and Future Work}  \label{sec:Conclusion}
In this paper, we studied the remote estimation problem where the sensor communicates with the remote estimator over a fading channel. The transmission power control strategy, which affects the behavior of communications, as well as the remote estimator were optimally co-designed to minimize an infinite horizon cost consisting of power consumption and estimation error. We showed that when determining the optimal transmission power, the full information history available at the sensor is equivalent to its belief state. Since no constraints on the information structure are imposed and the belief state is updated recursively, the results we obtained provide some insights into the qualitative characterization of the optimal power allocation strategy and facilitate the related performance analyses. In particular, we provided some structural results on the optimal power allocation strategy and the optimal estimator, which simplifies the practical implementation of the algorithm significantly.
One direction of future work is to explore the structural description of the optimal remote estimator and the optimal transmission power control rule
when the system matrix is a general one.

%
%
%
%
%

\section*{Appendix A }
We will introduce background knowledge for weak convergence of probability measures and generalized skorohad space.
%
%
%

\subsection{Weak Convergence of Probability Measures}
Let $\mathcal X$ be a general Polish space and $\mathscr{X}$ be the Borel $\sigma-$field~\cite{durrett2010probability}. Let $\mu$ and $\{\mu_{i, i\in\mathbb{N}}\}$ be probability measures on $(\mathcal X,\mathscr{X})$.
By the Portmanteau Theorem~\cite{billingsley2013convergence}, the following statements are equivalent:
\begin{enumerate}
  \item[(\rmnum{1}).] $\mu_i$ converges weakly to $\mu$.
  \item[(\rmnum{2}).] $\lim_{i\to\infty}\int b \mathrm{d} \mu_i  \rightarrow  \int b \mathrm{d} \mu$ for every bounded and continuous function $b(\cdot)$ on $X$.
  \item[(\rmnum{3}).] $\lim_{i\to\infty}\mu_i(\mathcal B) \to \mu(\mathcal B)$ for every $\mu-$continuity set $\mathcal B$.
\end{enumerate}
We write as $\mu_i \overset{w}{\to} \mu $ if $\mu_i$ converges weakly to $\mu$. The Prohorov metric~\cite{billingsley2013convergence} is a  metrization of this weak convergence topology.
Let $\mathcal{M}$ be the collection of all the probability measures defined on $(\mathcal X,\mathscr{X})$. If $\mathcal{M}$ is endowed with the weak convergence topology, then $\mathcal{M}$ is a Polish space.

\subsection{Generalized Skorohod Space~\cite{straf1972}.}
Let $(\mathcal X,\mathbbm{d}_{\mathcal X}(\cdot,\cdot))$ be a compact metric space and $\Lambda$ be a set of homeomorphisms from $\mathcal X$ onto itself. Let $\bbpi$ be a generic element of $\Lambda$, then on $\Lambda$, define the following three norms:
\begin{align*}
\|\bbpi\|_{\rm s} =& \sup_{x\in \mathcal X} \mathbbm{d}_{\mathcal X}(\bbpi x, x)\\
\|\bbpi\|_{\rm t} =& \sup_{x,y\in \mathcal X: x\not=y}\Big|\log \frac{\mathbbm{d}_{\mathcal X}(\bbpi x, \bbpi y)}{\mathbbm{d}_{\mathcal X}( x, y)}\Big| \\
\|\bbpi\|_{\rm m} =& \|\bbpi\|_{\rm s} +  \|\bbpi\|_{\rm t}.
\end{align*}
Note that $\|\bbpi\|_{\rm t} = \|\bbpi^{-1}\|_{\rm t}$. Let $\Lambda_{\rm t} \subseteq \Lambda$ be the group of homeomorphisms with finite $\|\cdot\|_{\rm t}$, i.e.,
\[\Lambda_{\rm t} = \{\bbpi\in\Lambda:  \|\bbpi\|_{\rm t} < \infty  \}.\]
Note that since $\mathcal X$ is compact, each element in $\Lambda_{\rm t}$ also has finite $\|\cdot\|_{\rm m}$.
Let $\mathcal{B}_r(\mathcal X)$ be the set of bounded real-valued functions defined on $\mathcal X$, then the Skorohod distance $\mathbbm{d}(\cdot,\cdot)$ for $f,g\in\mathcal{B}_r(\mathcal X)$ is defined by
\begin{align*}
\mathbbm{d}(f,g) = \inf_{\epsilon}\{&\epsilon > 0: \exists\, \bbpi\in\Lambda_{\rm t} \:\text{such that} \\
& \qquad  \|\bbpi\|_{\rm m} < \epsilon \:\text{and}\: \sup_{x\in \mathcal X}|f(x)-g(\bbpi x)| < \epsilon \}. \addtag \label{eqn:SkorohodDistance}
\end{align*}
Let $\mathcal{W}$ be the set of all finite partitions of $\mathcal X$ that are invariant under $\Lambda$. Let $I_{\Delta}$ be the collection of functions that are constant on each cell of  a partition $\Delta\in\mathcal {W}$. Then the generalized Skorohod space on $\mathcal X$ are defined by
\begin{align*}
\mathcal{D}(\mathcal X) = \{ f\in \mathcal{B}_{r}(\mathcal X):&\:\exists\:\Delta \in \mathcal{W},\:
g\in I_{\Delta}\:\text{such that} \\
& \quad \mathbbm{d}(f,g) = 0     \}. \addtag \label{eqn:SkorohodSpace}
\end{align*}
By convention, two functions $f$ and $g$ with $\mathbbm{d}(a,b)=0$ are not distinguished.
Then by Lemma 3.4, Theorems 3.7 and~3.8 in~\cite{straf1972}, the space $\mathcal{D}(\mathcal X)$ of the resulting equivalence classes with metric $\mathbbm{d}(\cdot,\cdot)$ defined in~\eqref{eqn:SkorohodDistance} is a complete metric space.
For $f\in\mathcal{B}_{r}(X)$ and $\Delta =\{\delta_j\}\in \mathcal{W}$, define
\begin{align}
 \mathbbm{w}(f,\Delta) = \max_{\delta_j}\sup_{x,y}\{|f(x)-f(y)|: x,y\in\delta_j\}.  \label{eqn:SkorohodFuncw}
 \end{align}
For $f\in\mathcal{B}_{r}(X)$, $f\in\mathcal{D}(\mathcal X)$ if and only if $\lim_{\Delta}\mathbbm{w}(f,\Delta) \to 0,$ with the limits taken along the direction of refinements.

In the end, we should remark that in our case for the action space $\mathcal{A}$, the collection of homeomorphisms $\Lambda_{\rm t}$ used to define the Skorohod distance in \eqref{eqn:SkorohodDistance}  is defined on $\mathcal{E}$ instead of $\mathbb{R}^n$.

\section*{Appendix B}
In the following lemma, we give a condition on the probability measures, under which the weak convergence implies set-wise convergence.
\begin{lemma} \label{lemma:WeakConv2Setwise}
Let $\mu$ and $\{\mu_{i,i\in\mathbb{N}}\}$ be probability measures defined on $(\mathbb{R}^n, \mathscr {B}(\mathbb{R}^n))$, where
$\mathscr {B}(\mathbb{R}^n)$ denotes the Borel $\sigma-$algebra of $\mathbb{R}^n$. Suppose they are absolutely continuous with respect to the Lebesgue measure. Then the following holds:
\begin{align}
\mu_i \overset{w}{\to} \mu  \Rightarrow  \mu_i \overset{sw}{\to} \mu, \label{eqn:weak2setwise}
\end{align}
where $\mu_i \overset{sw}{\to} \mu$ represents set-wise convergence, i.e., for any $\mathcal A\in \mathscr {B}(\mathbb{R}^n)$, $\mu_i(\mathcal A) \to \mu(\mathcal A).$
\end{lemma}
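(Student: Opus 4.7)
The plan is to leverage the Portmanteau characterization of weak convergence together with the absolute continuity hypothesis, via a sandwich argument. First, by the Portmanteau theorem, the weak convergence $\mu_i \overset{w}{\to}\mu$ implies $\mu_i(B)\to \mu(B)$ for every $\mu$-continuity set $B$, i.e.\ every Borel set with $\mu(\partial B)=0$. Since $\mu \ll \mathfrak{L}$, this conclusion extends to every Borel set $B$ with $\mathfrak{L}(\partial B)=0$; I will call such a $B$ an \emph{$\mathfrak{L}$-continuity set}. Hence the proof reduces to showing that an arbitrary Borel set can be sandwiched between $\mathfrak{L}$-continuity sets whose $\mu$-mass difference is arbitrarily small.

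The next step is the construction of the sandwich. For an arbitrary Borel set $\mathcal{A}$ and any $\epsilon>0$, I would produce $\mathfrak{L}$-continuity sets $B^-, B^+$ with $B^- \subseteq \mathcal{A} \subseteq B^+$ and $\mu(B^+ \setminus B^-) < \epsilon$. The natural building blocks are open balls in $\mathbb{R}^n$, whose spherical boundaries have zero Lebesgue measure; hence any finite union of such balls is an $\mathfrak{L}$-continuity set. Using tightness of the finite Borel measure $\mu$ to confine attention to a compact region up to $\mu$-mass $\epsilon$, and then invoking regularity of $\mu$ restricted to that region, I would outer-approximate (the bulk of) $\mathcal{A}$ by a finite union of open balls and inner-approximate it by a finite union of closed balls, ensuring the gap has $\mu$-mass below~$\epsilon$.

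Given the sandwich, the conclusion is immediate: since $B^{\pm}$ are $\mathfrak{L}$-continuity and thus $\mu$-continuity sets, Portmanteau gives $\mu_i(B^{\pm}) \to \mu(B^{\pm})$. Combining $\mu_i(B^-) \leq \mu_i(\mathcal{A}) \leq \mu_i(B^+)$ with $\mu(B^+)-\mu(B^-)<\epsilon$ yields $|\limsup_i \mu_i(\mathcal{A}) - \mu(\mathcal{A})|<\epsilon$ and the analogous bound for $\liminf_i$; letting $\epsilon \downarrow 0$ gives the claimed setwise convergence. The hypothesis $\mu_i \ll \mathfrak{L}$ is used here to guarantee that $\mu_i$-measures of modifications of $\mathcal{A}$ on $\mathfrak{L}$-null sets are preserved, so the sandwich argument is insensitive to the choice of Borel representative.

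The main obstacle is the sandwich construction itself. A naive application of the outer/inner regularity of $\mu$ only produces open and closed approximants, whose topological boundaries can have positive Lebesgue measure (for instance, the complement of a fat Cantor set is an open set with a positive-measure boundary), so they need not be $\mathfrak{L}$-continuity sets and Portmanteau yields only one-sided inequalities on them. The resolution I propose---replacing those approximants with finite unions of balls and using tightness plus compactness to ensure finitely many balls suffice---is the technically delicate core of the proof.
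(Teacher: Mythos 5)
Your opening reduction (Portmanteau applied to $\mathfrak{L}$-continuity sets, which are $\mu$-continuity sets because $\mu\ll\mathfrak{L}$) is fine, but the sandwich step is a genuine gap, and it cannot be repaired. Already for $n=1$, $\mu$ uniform on $[0,1]$ and $\mathcal{A}=\mathbb{Q}\cap[0,1]$, there is no Borel set $B^+\supseteq\mathcal{A}$ with $\mathfrak{L}(\partial B^+)=0$ and $\mu(B^+\setminus\mathcal{A})$ small: density of $\mathcal{A}$ gives $\mathrm{cl}(B^+)\supseteq[0,1]$, and $\mathfrak{L}(\partial B^+)=0$ then forces the interior of $B^+$ to have full Lebesgue measure in $[0,1]$, hence $\mu(B^+)=1$ while $\mu(\mathcal{A})=0$. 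So neither finite unions of balls nor any other $\mathfrak{L}$-continuity set can outer-approximate a general Borel set in the sense you need. Worse, the failure is not specific to your method: the lemma as stated is false. Let $A_i\subset[0,1]$ be the union of the $2^i$ intervals $[k2^{-i},\,k2^{-i}+8^{-i}]$, $k=0,\dots,2^i-1$, and let $\mu_i$ have density $4^i\mathds{1}_{A_i}$. Every $\mu_i$ is absolutely continuous and a Riemann-sum computation shows $\mu_i\overset{w}{\to}\mu$ with $\mu$ uniform on $[0,1]$; yet for the fixed Borel set $\mathcal{A}=\cup_{i\geq 1}A_i$ one has $\mu_i(\mathcal{A})=1$ for all $i$ while $\mu(\mathcal{A})\leq\sum_{i\geq1}4^{-i}=1/3$. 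Thus absolute continuity of all the measures does not upgrade weak convergence to setwise convergence; one needs stronger hypotheses on the densities (e.g.\ a.e.\ or $L^1$ convergence, which by Scheff\'e's lemma even yields total-variation convergence, or their uniform integrability).

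For comparison, the paper's own proof takes a different route but founders at the same conceptual point. It notes that rectangles are $\mu$-continuity sets, concludes $\mu_i(R)\to\mu(R)$ for every rectangle $R$ via Portmanteau, and then asserts the convergence ``follows'' on all of $\mathscr{B}(\mathbb{R}^n)$ because rectangles generate the $\sigma$-algebra. Convergence of a sequence of measures on a generating $\pi$-system does not pass to the generated $\sigma$-algebra without an additional uniformity argument, and indeed the counterexample above satisfies $\mu_i(R)\to\mu(R)$ for every rectangle. So both your sandwich and the paper's generating-class argument stall exactly at the step of upgrading convergence from ``nice'' sets to arbitrary Borel sets, and no argument can close that gap because the statement itself needs stronger hypotheses; any fix must add an assumption (such as a.e.\ convergence of the densities) both to this lemma and to the place in the verification of condition C3 where it is invoked.
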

\begin{proof}
The Borel $\sigma-$algebra $\mathscr {B}(\mathbb{R}^n)$ can be generated by $n-$demensional rectangles, i.e.,
\begin{align} \label{eqn:sigmaAlgebra}
\mathscr {B}(\mathbb{R}^n) = \sigma(\{(x_1,y_1]\times\cdots\times(x_n,y_n]: x_j,y_j\in\mathbb{R}\}).
\end{align}
Since $\mu$ is absolutely continuous with respect to Lebesgue measure, all the rectangles are $\mu-$continuity sets. By the Portmanteau Theorem~\cite{billingsley2013convergence}, for any $x_j,y_j\in\mathbb{R}$,
\[\mu_i((x_1,y_1]\times\cdots\times(x_n,y_n]) \to \mu((x_1,y_1]\times\cdots\times(x_n,y_n]).\]
Then statement~\eqref{eqn:weak2setwise} follows from~\eqref{eqn:sigmaAlgebra}, which completes the proof.
\end{proof}

\bibliographystyle{IEEETran}
\bibliography{reference1,dquevedo,sj_reference,reference_xq,sj_reference_1}

\end{document}